\documentclass{article}
\usepackage{iclr2027_conference,times}

\usepackage{booktabs}
\usepackage{hyperref}
\usepackage{url}
\usepackage{float}
\usepackage{graphicx}
\usepackage{xspace}
\usepackage{wrapfig}
\usepackage[section]{placeins}

\usepackage{amsthm}
\newtheorem{theorem}{Theorem}
\newtheorem{proposition}[theorem]{Proposition}
\newtheorem{lemma}[theorem]{Lemma}
\newtheorem{corollary}[theorem]{Corollary}
\theoremstyle{definition}
\newtheorem{assumption}[theorem]{Assumption}

\usepackage{amsmath,amsfonts,bm}

\def\eqref#1{equation~\ref{#1}}

\def\1{\bm{1}}

\DeclareMathAlphabet{\mathsfit}{\encodingdefault}{\sfdefault}{m}{sl}
\SetMathAlphabet{\mathsfit}{bold}{\encodingdefault}{\sfdefault}{bx}{n}

\newcommand{\method}{\textbf{\texttt{ReGap}}\xspace}

\title{
Sleeping Secrets: How Fine-Tuning Reawakens Privacy Risks in Language Models
}

\iclrfinalcopy

\author{
Jianhong Li\textsuperscript{\normalfont 1}\thanks{Equal contribution.},
Jiahao Chen\textsuperscript{\normalfont 2}\footnotemark[1],
Yuwen Pu\textsuperscript{\normalfont 1}\thanks{Corresponding authors.}, 
Chunyi Zhou\textsuperscript{\normalfont 1},
Oubo Ma\textsuperscript{\normalfont 1},
Zhou Feng\textsuperscript{\normalfont 1},\And
Hangtao Zhang\textsuperscript{\normalfont 3}, 
Jichao Bi\textsuperscript{\normalfont 1}, 
Chunqiang Hu\textsuperscript{\normalfont 1}\\
\\
\textsuperscript{1}Chongqing University, \quad \\
\textsuperscript{2}Zhejiang University \quad \\
\textsuperscript{3}Huazhong University of Science and Technology
}

\begin{document}

\maketitle

\fancyhead{}
\renewcommand{\headrulewidth}{0pt}

\begin{abstract}
Beyond adapting Large Language Models (LLMs) to specialized applications, fine-tuning has recently been shown to recover private information that is no longer accessible through direct queries. Previous fine-tuning recovery attacks, however, require genuine private supervision drawn from the same distribution, i.e., the previous training dataset. We argue that such recovery remains possible without such impractical knowledge. We show that LLM-generated candidates can provide sufficient supervision to recover previously learned private associations. Based on this, we propose \method, a data-free attack that recovers private associations using task structure, filters them by answer-token likelihood, and updates the target model via low-rank adaptation. Specifically, \method requires neither target answers nor auxiliary genuine private supervision. Across six GPT-2, OPT, and Qwen3 models, \method improves target-association recovery by 6-21 percentage points over the post-training target model. Recovery remains substantial even when the adaptation identities are disjoint from all memorized and evaluation identities, with no exact target answers appearing in the generated or selected supervision. Moreover, the same trained adapters increase recovery from 42\% to 63\% on a previously exposed checkpoint, but produce no gain on a matched checkpoint that never encountered the targets. This contrast shows that adaptation alone is insufficient to explain the observed recovery and that prior target exposure strongly affects post-adaptation recoverability. Our findings highlight that routine model customization can reawaken latent privacy risks, warranting urgent attention from the academic and industrial communities.
\end{abstract}

\section{Introduction}

Fine-tuning has become a routine part of the lifecycle of Large Language Models (LLMs), enabling pretrained models to adapt to new tasks, domains, and applications \citep{gururangan2020don,hu2021lora}. This continued adaptation also has implications for privacy. LLMs can memorize sensitive information and personally identifiable information (PII) from their pretraining data and disclose it in response to carefully crafted queries~\citep{carlini2019secret,carlini2021extracting,tirumala2022memorization,
biderman2023emergent,lukas2023analyzing}. Recent work even shows that adaptation can introduce additional privacy risks: sensitive information may be
unintentionally memorized during fine-tuning, model-generated training data
can amplify privacy leakage, and even benign fine-tuning can weaken contextual
privacy protections \citep{szep2026unintended,akkus2025generated,
goel2026privacy}. These findings motivate evaluating privacy not only at a
fixed checkpoint, but across the model's subsequent adaptation lifecycle.

A particularly relevant line of work studies whether later updates can restore
access to information that is difficult to elicit directly. Janus shows that
fine-tuning on genuine private association pairs can recover previously learned
PII \citep{chen2024janus}. Related studies show that downstream fine-tuning can
also reverse apparent forgetting after machine unlearning
\citep{wang2025invariance}, while malicious model-side interventions can
increase privacy leakage during subsequent fine-tuning
\citep{rashid2025forget}. These results establish post-adaptation privacy as
an important attack surface, but they leave a different attacker capability
largely unexplored. Specifically, they assume access to authentic private records from
the target training distribution~\cite{chen2024janus}, an advantage that may be unavailable to an
attacker who only obtains the released model. In this setting, the attacker has access to the released post-training model, but not to the original model from which it was derived. This released model
is our target model: the attacker seeks to recover private associations
learned before post-training, without genuine private records as supervision.

This leads to our central question: can previously learned associations become more recoverable using only model-generated supervision, without externally supplied private records? We further ask whether recovery persists when adaptation uses neither target identities nor exact target answers. To investigate these questions, we propose \method, a recovery attack that generates candidate
associations from the target model, selects candidates using answer-token
likelihood under the same frozen model, and trains a fresh LoRA adapter on the
resulting model-generated supervision. \method uses neither target answers nor
auxiliary genuine private records during candidate generation, selection, or
adaptation.

Across six GPT-2, OPT, and Qwen3 settings, one round of \method improves
target-association recovery by 6-21 percentage points over the target model. Recovery remains substantial when adaptation identities are
disjoint from all memorized and evaluation identities, with no exact target
answer appearing in either the generated candidate pools or selected examples.
Moreover, applying the same trained adapters to matched checkpoints yields a
21-point gain only when the model previously encountered the targets, while a
target-unexposed checkpoint shows no gain. These controls indicate that the
observed recovery depends on prior exposure rather than adaptation alone,
while avoiding any claim that individual memories remain internally intact.

Our main contributions are summarized as follows:

\begin{itemize}

\item We identify a post-adaptation privacy risk in which previously learned
private associations can become substantially more recoverable even without
genuine private supervision. In the structured association setting studied
here, recovery can persist when adaptation data are constructed without
target identities and without exact target-answer replay, showing that direct
behavioral inaccessibility alone is insufficient to characterize later privacy
exposure.

\item We introduce \method, a recovery attack that replaces genuine private
supervision with model-generated candidates. \method follows a
Generate--Select--Adapt pipeline: it generates task-consistent candidates from
the target model, ranks them by answer-token likelihood under the frozen model,
and trains a fresh LoRA adapter on the selected supervision. We further provide
a local gradient-based analysis that characterizes when auxiliary updates can
reduce target loss without identity or answer overlap and motivates the
frozen-model selection rule.

\item We systematically characterize the effect and its boundaries across
GPT-2, OPT, and Qwen3 models. One round of \method improves
target-association recovery by 6--21 percentage points across six model
settings; matched controls further show substantial recovery with unseen
adaptation identities and a strong dependence on prior target exposure.
At the same time, recovery weakens sharply for opaque random bindings and
after explicit NPO unlearning, establishing important limits on when
model-generated supervision enables recovery.

\end{itemize}

\section{Related Work}
\label{sec:related}

\subsection{Memorization and Behavioral Accessibility}
Language models can memorize specific training examples, with memorization
affected by repetition, model scale, and training dynamics
\citep{carlini2019secret,tirumala2022memorization,carlini2022quantifying,
biderman2023emergent}. Memorization also extends beyond verbatim sequences to
entity-attribute associations that are directly relevant to privacy
\citep{zhou2024quantifying,chen2026profiling,li2026auditing}. However, whether previously learned information can be elicited from a checkpoint may change after subsequent updates
\citep{jang2022,jagielski2022measuring}.

Recent work further highlights the distinction between behavioral forgetting
and information removal. Latent-memory analyses and unlearning audits find
evidence that information can remain recoverable or decodable even after its
direct accessibility has been substantially reduced
\citep{duan2025uncovering,goel2026auditing,cohen2026remind}.
We therefore use \emph{Hidden} purely as a behavioral notion: low direct
recovery does not imply either deletion or intact internal storage.

\subsection{Privacy Extraction from Language Models}
Training-data extraction studies show that memorized text and PII can be elicited through prompting, prompt optimization, and large-scale querying
\citep{carlini2021extracting,lukas2023analyzing,ozdayi2023controlling,
keum2025private,nasr2025scalable}. These attacks primarily operate through
inference-time interaction with a fixed model checkpoint.

Importantly, successful prediction of a private attribute does not necessarily
imply item-specific memorization, since lexical and distributional cues can also
support reconstruction \citep{luo2026llms,pu2025facial}. This distinction is particularly
relevant to our setting, where structured associations may contain exploitable
regularities even when the corresponding target answers are unavailable to the
attacker.

\subsection{Post-Adaptation Privacy Risks}
Privacy risk can change after a model undergoes subsequent adaptation.
Post-training updates can substantially alter model behavior
\citep{gururangan2020don,qi2024fine}. This extends to privacy:
fine-tuning on generated data can increase PII extraction and membership
inference risk \citep{akkus2025generated}, benign adaptation can degrade
contextual privacy \citep{goel2026privacy}, and changes involving one
set of personal records can affect behavior on related information
\citep{borkar2025privacy}.

Most closely related to our work, Janus moves beyond fixed-checkpoint
extraction by using fine-tuning as the attack interface: genuine private
association pairs are used to restore access to previously learned PII
\citep{chen2024janus}. A related line of work studies the interaction between
forgetting and subsequent adaptation. Downstream fine-tuning can recover
information targeted by machine unlearning, motivating unlearning procedures
designed to remain invariant to later updates
\citep{wang2025invariance}. Conversely, Rashid et al.\ show that a maliciously
modified pretrained model can exploit unlearning to amplify privacy leakage
during later victim fine-tuning \citep{rashid2025forget}.

\method studies a different supervision regime. It neither designs an unlearning
defense nor modifies an original model to implant a vulnerability.
Instead, the attacker starts from a given target model in which previously learned associations have low direct recoverability and
constructs supervision from the model itself, without access to target answers
or genuine private association pairs. Our question is therefore not simply
whether adaptation can alter privacy, but whether previously learned
associations can become more accessible again when genuine private supervision
is unavailable.

\section{Problem Formulation}
\label{sec:problem}

We focus on whether fine-tuning can make previously learned private associations
accessible again after their direct recoverability has declined. Let
$\mathcal D_{\mathrm{tar}}=\{(x_i,y_i)\}_{i=1}^{N}$, where $N\ge 1$,
$x_i$ is a query key, and $y_i$ is a nonempty private attribute. A fixed task
template $\tau$ gives $q_i=\tau(x_i)$ and $t_i=(q_i,y_i)$. Under a fixed
evaluation protocol, let $\hat y_i(M)$ be the continuation generated by $M$
for $q_i$, excluding the prompt. We measure recovery by
\begin{equation}
\operatorname{Rec}(M)=\frac{1}{N}\sum_{i=1}^{N}
\mathbf 1\!\left[
\operatorname{Norm}(y_i)\ne\varepsilon
\;\wedge\;
\operatorname{Norm}(y_i)
\preceq_{\mathrm{sub}}\operatorname{Norm}(\hat y_i(M))
\right],
\label{eq:recovery}
\end{equation}
where $\epsilon$ is the empty string, $\preceq_{\mathrm{sub}}$ denotes
contiguous substring inclusion, and $\mathrm{Norm}$ lowercases text, trims
surrounding whitespace, and collapses consecutive whitespace. Prompting, decoding settings, and generation budgets are held fixed across models; any evaluation randomness is included in the evaluation protocol.
Let $M_{\mathrm{m}}$ be an original model that learned the target
associations. Subsequent non-target training produces a post-training model $M_h$
with $\operatorname{Rec}(M_h)<\operatorname{Rec}(M_{\mathrm{m}})$~\citep{chen2024janus}. This establishes neither deletion nor intact internal retention of the associations. The attacker receives only $M_h$.

\paragraph{Threat model.}
The attacker knows the association schema and answer format and can query
the target model $M_h$, evaluate token-level likelihoods, and fine-tune
it. Adaptation uses auxiliary keys $X_{\mathrm{aux}}$ and model-generated
answers, without externally supplied target answers or genuine private
association pairs. No distributional match between externally supplied
auxiliary data and the targets is required. We use \emph{data-free} in this
supervision sense; generated candidates may nevertheless coincide with true
answers. Correctness labels, target-recovery feedback, and $M_{\mathrm{m}}$
are unavailable for attack construction or model selection. Ground truth is
reserved for evaluation and post-hoc analysis.
In the target-aware setting, $X_{\mathrm{aux}}$ may include known target
keys. In the \emph{Unseen-Identity Control}, it is disjoint from the keys
used in the controlled memorization stage and target evaluation. Target
identities are therefore unnecessary for constructing adaptation data,
although evaluation queries still use the target keys.

\paragraph{Objective.}
Let $\mathcal D_{\mathrm{aux}}$ denote the auxiliary fine-tuning dataset
constructed by selecting model-generated QA pairs. With $\mathcal K_{\mathrm{att}}$ denoting the task knowledge and auxiliary keys, attacker constructs 
\begin{equation}
    \mathcal D_{\mathrm{aux}}
=\mathcal B(M_h,\mathcal K_{\mathrm{att}}),\quad M_a=\mathcal A(M_h,\mathcal D_{\mathrm{aux}}).
\end{equation}
Our objective is to determine whether the attacker can achieve 
\begin{equation}
    \Delta_{\mathrm{rec}}=\operatorname{Rec}(M_a)-\operatorname{Rec}(M_h)>0.
\end{equation}

\section{From Recovery Geometry to \method}
\label{sec:method}

\method{} constructs fine-tuning supervision from the target model itself through a Generate-Select-Adapt pipeline, as illustrated in Figure~\ref{fig:regap}. We first characterize how an auxiliary update can improve unknown targets, then motivate an observable selection rule and specify the Generate-Select-Adapt procedure.
\begin{figure}[t]
    \centering
    \includegraphics[width=0.9\linewidth]{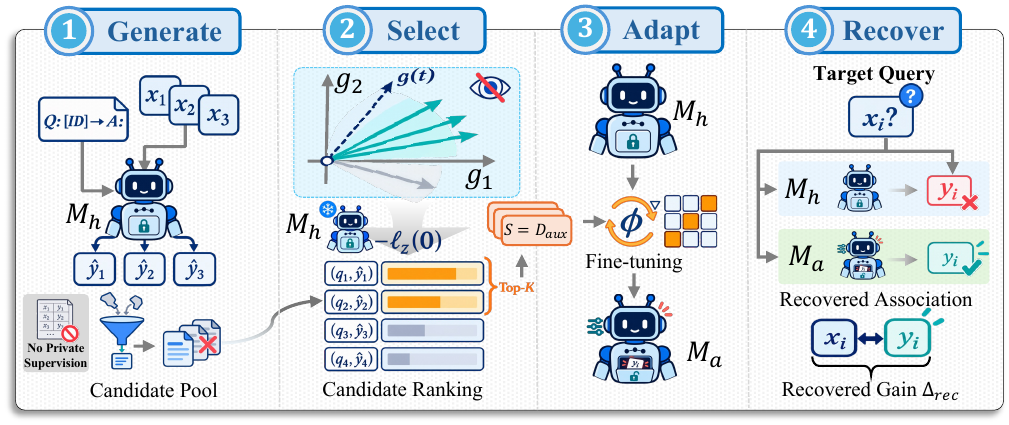}
    \vspace{-10pt}
    \caption{Overview of \method. Generate--Select--Adapt for recovering hidden associations.}
    \label{fig:regap}
\end{figure}

\subsection{Local Geometry of Fine-Tuning}
\label{sec:recovery_geometry}

Let $M(\phi)$ be a parameterization of fine-tuning, with
$\phi\in\mathbb R^d$ denoting displacements in the chosen trainable
coordinates and $M(0)=M_h$. For $z=(q,y)$ with $|y|\ge 1$ answer tokens,
define
\begin{equation}
\ell_z(\phi)
=-\frac{1}{|y|}\sum_{t=1}^{|y|}
\log p_{M(\phi)}(y_t\mid q,y_{<t}),
\qquad
g(z)=\nabla_\phi\ell_z(0).
\label{eq:answer_loss}
\end{equation}
Prompt tokens do not contribute to the loss. For a nonempty auxiliary set
$S$, let 
\begin{equation}
    L_T(\phi)=\frac{1}{N}\sum_{i=1}^{N}\ell_{t_i}(\phi), \quad L_S(\phi)=\frac{1}{|S|}\sum_{z\in S}\ell_z(\phi), 
\end{equation}
and write $g_T=\nabla L_T(0)$ and $g_S=\nabla L_S(0)$. The local gradient-alignment kernel is
\begin{equation}
\kappa_h(z,z')=\langle g(z),g(z')\rangle.
\label{eq:recovery_kernel}
\end{equation}

\begin{theorem}[Local target-loss change]
\label{thm:local_recovery}
If $L_T$ is $\beta$-smooth on the segment from $0$ to a parameter
displacement $\delta$, then 
\begin{equation}
    L_T(\delta)\le L_T(0)+\langle g_T,\delta\rangle +\frac{\beta}{2}\|\delta\|_2^2. 
\label{eq:general_update_bound}
\end{equation}

In particular, for the reference gradient step $\delta=-\eta g_S$,
$\eta>0$, smoothness on that segment gives
\begin{equation}
L_T(-\eta g_S)
\le L_T(0)
-\frac{\eta}{N|S|}\sum_{i=1}^{N}\sum_{z\in S}\kappa_h(t_i,z)
+\frac{\beta\eta^2}{2}\|g_S\|_2^2.
\label{eq:local_recovery_bound}
\end{equation}
Thus, the target loss strictly decreases if
\begin{equation}
\langle g_T,g_S\rangle
>\frac{\beta\eta}{2}\|g_S\|_2^2.
\label{eq:recovery_condition}
\end{equation}
\end{theorem}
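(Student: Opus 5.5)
The plan is to derive the general bound in \eqref{eq:general_update_bound} from the standard descent lemma, restricted to the segment $\{s\delta: s\in[0,1]\}$, and then specialize it. We read $\beta$-smoothness on the segment as: $L_T$ is differentiable there and $\|\nabla L_T(s\delta)-\nabla L_T(0)\|_2\le\beta s\|\delta\|_2$ for $s\in[0,1]$. Since the segment is all we assume, this is the only Lipschitz control we will use.

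Define $h(s)=L_T(s\delta)$. By the fundamental theorem of calculus,
\begin{equation*}
L_T(\delta)-L_T(0)=\int_0^1\langle \nabla L_T(s\delta),\delta\rangle\,ds
=\langle g_T,\delta\rangle+\int_0^1\langle \nabla L_T(s\delta)-g_T,\delta\rangle\,ds .
\end{equation*}
Cauchy--Schwarz together with the Lipschitz condition bounds the last integral by $\int_0^1\beta s\|\delta\|_2^2\,ds=\frac{\beta}{2}\|\delta\|_2^2$. This gives \eqref{eq:general_update_bound}.

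Next substitute $\delta=-\eta g_S$. This yields
\begin{equation*}
L_T(-\eta g_S)\le L_T(0)-\eta\langle g_T,g_S\rangle+\frac{\beta\eta^2}{2}\|g_S\|_2^2 .
\end{equation*}
The inner product is then rewritten through the kernel. By linearity of the gradient, $g_T=\frac1N\sum_i g(t_i)$ and $g_S=\frac1{|S|}\sum_{z\in S}g(z)$. Bilinearity of $\langle\cdot,\cdot\rangle$ then gives
\begin{equation*}
\langle g_T,g_S\rangle=\frac{1}{N|S|}\sum_{i=1}^N\sum_{z\in S}\kappa_h(t_i,z),
\end{equation*}
which is \eqref{eq:local_recovery_bound}.

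For the strict-decrease condition, note that the right-hand side above equals $L_T(0)-\eta\bigl(\langle g_T,g_S\rangle-\frac{\beta\eta}{2}\|g_S\|_2^2\bigr)$. Because $\eta>0$, condition \eqref{eq:recovery_condition} makes the bracket strictly positive, so $L_T(-\eta g_S)<L_T(0)$.

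The only delicate step is the differentiability and linearity bookkeeping, since everything else is routine. Each $\ell_z$ is differentiable at $0$ because softmax log-probabilities are smooth in the trainable coordinates. Differentiation commutes with the finite averages defining $L_T$ and $L_S$, so the gradient identities hold. We also use the Lipschitz-gradient form of smoothness along the segment, rather than a global assumption, so the hypothesis stated in the theorem suffices.
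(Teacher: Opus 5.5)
Your proposal is correct and follows the paper's proof essentially step for step. Both arguments apply the fundamental theorem of calculus along the segment, split off $\langle g_T,\delta\rangle$, bound the remainder by $\beta t\|\delta\|_2^2$ via Cauchy--Schwarz and smoothness, and then substitute $\delta=-\eta g_S$. Both then expand $\langle g_T,g_S\rangle$ bilinearly into the kernel sum and use $\eta>0$ for strict decrease.
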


Theorem~\ref{thm:local_recovery} separates answer overlap from update
alignment: auxiliary examples need not reproduce target answers to decrease
target loss. For a candidate distribution $Q$, the quantity $\sigma_i(Q)=\mathbb E_{z\sim Q}[\kappa_h(t_i,z)]$ measures first-order susceptibility to auxiliary gradient updates in the
chosen coordinates. Appendix~\ref{app:shared_structure} gives sufficient
shared-gradient conditions for $\sigma_i(Q)>0$ even when candidate identities
and answers do not overlap the target. This motivates the unseen-identity
control without requiring an intact internal representation of each record.
The general bound applies to an actual parameter displacement, whereas the
kernel expression specializes to a gradient step. Its trajectory-wise
extension appears in Appendix~\ref{app:multistep_bound}. A decrease in
$L_T$ alone does not imply an increase in $\operatorname{Rec}$:
Appendix~\ref{app:margin_crossing} supplies a decoding-specific sufficient
condition and distinguishes new recoveries from lost recoveries.

\subsection{From Self-Generated Answers to an Observable Selector}
\label{sec:observable_selection}

\paragraph{Why generation and selection matter.}
Self-generated supervision does not automatically produce a directed update.
Under exact on-policy sampling and a matched complete-sequence likelihood,
the expected gradient at the generating model is zero
(Proposition~\ref{prop:self_sampling_nullity}). Answer-length weighting,
biased decoding, and filtering need not preserve this cancellation.
\method{} combines greedy and temperature-controlled top-$p$ generation,
format filtering, and support-based selection to shape the auxiliary data.
These operations can create a nonzero mean update; whether that update
benefits the targets depends on its alignment, not merely on its being
nonzero.

\paragraph{An observable proxy for unavailable alignment.}
The first-order contribution of a candidate is
\begin{equation}
I_T(z)=\langle g_T,g(z)\rangle
=\frac{1}{N}\sum_{i=1}^{N}\kappa_h(t_i,z).
\label{eq:oracle_influence}
\end{equation}
This score requires unknown target answers. Instead, \method{} ranks
candidates by their support under the unchanged target model,
\begin{equation}
s_h(z)=-\ell_z(0).
\label{eq:frozen_support}
\end{equation}
Support is observable but is neither a correctness label nor an alignment
measurement. The following condition identifies when its ranking is useful.

\begin{assumption}[Conditional support--alignment ordering]
\label{ass:support_alignment}
For a random candidate pool $\mathcal C=(z_1,\ldots,z_m)$ of fixed size
$m\ge K\ge1$, let $\mathbf s=(s_h(z_1),\ldots,s_h(z_m))$ and
$a_j(\mathbf s)=\mathbb E[I_T(z_j)\mid\mathbf s]$, with integrable
$I_T(z_j)$. Almost surely, for every $j,k$, 
\begin{equation}
    s_h(z_j)\ge s_h(z_k) \quad\Longrightarrow\quad a_j(\mathbf s)\ge a_k(\mathbf s).
\end{equation}
For variable-size pools, the condition is imposed conditional on pool size.
\end{assumption}

\begin{proposition}[Conditional first-order selection]
\label{prop:topk_selector}
Under Assumption~\ref{ass:support_alignment}, selecting the $K$ largest
support scores maximizes
$\mathbb E[\langle g_T,g_S\rangle\mid\mathbf s]$ among all selectors
that choose exactly $K$ candidates using only the scores and independent
randomization.
\end{proposition}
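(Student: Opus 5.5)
The plan is to reduce the conditional objective to a sum of conditional alignments and then apply a rearrangement argument. First, I fix a selector that, given $\mathbf s$ and an independent random seed $U$, outputs a subset $S\subseteq\{1,\ldots,m\}$ with $|S|=K$. By linearity of the gradient,
\begin{equation*}
\langle g_T,g_S\rangle=\frac{1}{K}\sum_{j\in S}\langle g_T,g(z_j)\rangle=\frac{1}{K}\sum_{j=1}^{m}\mathbf 1[j\in S]\,I_T(z_j).
\end{equation*}
Integrability of each $I_T(z_j)$ makes the conditional expectation well defined. Since $S$ is a measurable function of $(\mathbf s,U)$ and $U$ is independent of the pool, conditioning on $\mathbf s$ lets me factor each term:
\begin{equation*}
\mathbb E[\mathbf 1[j\in S]\,I_T(z_j)\mid\mathbf s]=\pi_j(\mathbf s)\,a_j(\mathbf s),\qquad \pi_j(\mathbf s)=\Pr(j\in S\mid\mathbf s).
\end{equation*}
I will justify this factorization by conditioning first on $(\mathbf s,U)$. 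I then use that $U$ is independent of $(\mathcal C,\mathbf s)$, so $\mathbb E[I_T(z_j)\mid\mathbf s,U]=a_j(\mathbf s)$. This gives
\begin{equation*}
\mathbb E[\langle g_T,g_S\rangle\mid\mathbf s]=\frac1K\sum_j\pi_j(\mathbf s)\,a_j(\mathbf s).
\end{equation*}

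Second, I note that the inclusion probabilities satisfy $\pi_j\in[0,1]$ and $\sum_j\pi_j=K$. This holds because $|S|=K$ surely. The problem therefore becomes maximizing a linear functional $\sum_j\pi_j a_j$ over this capped simplex, pointwise in $\mathbf s$. A standard exchange argument shows the maximum is attained by putting $\pi_j=1$ on the $K$ largest $a_j$. Indeed, if some $\pi_k>0$ while some $\pi_j<1$ with $a_j\ge a_k$, moving mass from $k$ to $j$ does not decrease the sum. Hence the optimal value is the sum of the $K$ largest values of $a_j(\mathbf s)$.

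Third, I invoke Assumption~\ref{ass:support_alignment}. Almost surely, the ordering of the $a_j(\mathbf s)$ is weakly consistent with the ordering of the $s_h(z_j)$. So any set of $K$ indices with the largest support scores also consists of indices whose $a$-values are among the $K$ largest; that is, it attains the top-$K$ sum of the $a_j$. Ties in $s_h$ deserve care. When $s_h(z_j)=s_h(z_k)$, the assumption applied in both directions gives $a_j=a_k$. Any tie-breaking, deterministic or via $U$, therefore yields the same value, and the top-$K$ support selector achieves the pointwise maximum on an almost-sure event. Taking this pointwise maximality back through the conditional expectation gives the claim. For variable-size pools, I repeat the argument conditional on $m$, as the assumption stipulates.

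The main obstacle I expect is the measure-theoretic bookkeeping in the first step. I need the factorization $\mathbb E[\mathbf 1[j\in S]I_T(z_j)\mid\mathbf s]=\pi_j a_j$ to be valid for randomized selectors. The point is that the selector cannot see anything correlated with $I_T$ beyond $\mathbf s$, and this is exactly where the restriction to ``scores and independent randomization'' is used. I also need to handle that $a_j$ is defined only up to null sets, so the ordering and optimality statements are almost-sure. The combinatorial part is routine by comparison.
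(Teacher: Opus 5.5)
Your proposal is correct and follows the paper's route. You reduce the conditional objective to $\frac1K\sum_j\pi_j(\mathbf s)a_j(\mathbf s)$ with $\pi_j\in[0,1]$ and $\sum_j\pi_j=K$, then use the support--alignment ordering, with ties forcing equal $a_j$, to show that top-$K$ support attains the maximum. The paper replaces your exchange argument with the one-line identity $\sum_{j\le K}a_j-\sum_{j}\pi_ja_j=\sum_{j\le K}(1-\pi_j)(a_j-a_K)+\sum_{j>K}\pi_j(a_K-a_j)\ge 0$ after sorting by support, which avoids having to argue that the exchange process terminates at the top-$K$ indicator.
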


This is a conditional justification for the first-order selection criterion,
not a guarantee that support ranks alignment in every model. Actual
target-loss improvement additionally requires positive alignment large
enough to dominate the update penalty in
Theorem~\ref{thm:local_recovery}; Appendix~\ref{app:selector_bound}
makes this condition explicit. Comparison with selection from the same pool
at random assesses the selector's practical value, while post-hoc gradient
analysis can assess its alignment premise.

\subsection{\method: Generate--Select--Adapt}
\label{sec:regap_algorithm}

\paragraph{Generate task-consistent candidates.}
For auxiliary keys $X_{\mathrm{aux}}=\{u_a\}_{a=1}^{n_{\mathrm{aux}}}$,
construct $q_a^{\mathrm{aux}}=\tau(u_a)$. Using a frozen copy of $M_h$,
generate one greedy answer and $J$ temperature-controlled top-$p$ samples
per query. Remove empty or malformed answers using only the known answer
format, and deduplicate identical normalized answers within each query.
This yields a pool $\mathcal C=\{(q_a^{\mathrm{aux}},\hat y_{a,j})\}$
of $m$ distinct query--answer candidates. No candidate is checked against
a true private attribute. The same construction applies to target-aware
and unseen-identity auxiliary keys.

\paragraph{Select by frozen support.}
For every $z=(q,\hat y)\in\mathcal C$, compute
\begin{equation}
\mathcal L_{\mathrm{sel}}(z)
=-\frac{1}{|\hat y|}\sum_{t=1}^{|\hat y|}
\log p_{M_h}(\hat y_t\mid q,\hat y_{<t})
=\ell_z(0).
\label{eq:selection_loss}
\end{equation}
With a prescribed budget $K_0\ge1$, retain the
$K=\min(K_0,m)$ candidates with smallest selection loss, breaking ties
by a fixed rule. Selection is global across the pool and uses the target
model likelihoods, not the temperature- or top-$p$-modified generation
probabilities. If the filtered pool is empty, return $M_h$ without adaptation. The selected set $S$ constitutes the adaptation dataset, i.e., $\mathcal D_{\mathrm{aux}}=S$.

\paragraph{Fine-tune from the target model.}
Starting from $\phi_0=0$, take $H$ optimization steps on the selected
answer-token objective,
\begin{equation}
L_S(\phi)=\frac{1}{K}\sum_{z\in S}\ell_z(\phi),
\qquad
M_a=M(\phi_H).
\label{eq:regap_adaptation}
\end{equation}
The loss averages tokens within each answer and then averages examples.
Generation and scoring use the unchanged target model; only the model
being adapted is updated. Generation, selection, and training budgets are
fixed without target-answer validation or target-recovery feedback.
This specifies one Generate--Select--Adapt round with multi-step
fine-tuning, rather than an additional outer self-training loop.

\section{Evaluation}

\label{sec:evaluation}

We evaluate \method along three complementary questions:
\textbf{(RQ1)} whether model-generated supervision enables recovery across
models, \textbf{(RQ2)} how much target-specific information is required, and
\textbf{(RQ3)} whether frozen-model selection contributes beyond random
candidate selection. We additionally use a matched prior-exposure control to
test whether the observed gains can be explained by ordinary relearning alone.
\subsection{Experimental Setup}
\label{sec:setup}

\paragraph{Benchmark.}
We primarily evaluate \method on 3,000 structured person-email associations
derived from the Enron email corpus \citep{klimt2004enron} using the LLM-PBE
pipeline \citep{li2024llm}. We reserve 50 associations as genuine-private
references and 100 as recovery targets. The remaining 2,850 associations form
the non-target pool, of which 300 are used for non-target evaluation.
We use WikiText-2 \citep{merity2016pointer} for subsequent non-target continual
training that produces the post-training model $M_h$. Enron person-email associations
have been used in prior language-model privacy and recovery studies
\citep{li2024llm,chen2024janus}, making them a natural testbed for our
setting. We nevertheless treat evaluation on a single primary benchmark as a
limitation and examine additional structural controls in the appendix.

\paragraph{Models.}
We evaluate six open-source autoregressive models spanning three families and
approximately 0.1B-4B parameters: GPT-2 Small, Medium, and Large
\citep{radford2019language}, OPT-1.3B~\citep{zhang2022opt}, and
Qwen3-1.7B/4B~\citep{yang2025qwen3}. These models satisfy the access required
by our threat model-token-level likelihood scoring and parameter-efficient
adaptation-while providing both controlled within-family scaling through
GPT-2 and cross-family evaluation through OPT and Qwen3.

\paragraph{Baselines.}
We use complementary baselines to isolate different sources of recovery.
\emph{Hidden} measures direct recoverability before adaptation.
\emph{Matched Random} samples the same number of examples from the same
model-generated candidate pool and uses the same LoRA adaptation budget as
ReGap, isolating the contribution of loss-based candidate selection.
\emph{Genuine-Private LoRA} replaces model-generated supervision with 50
genuine private-reference associations under the same adaptation budget,
providing a stronger-supervision reference related to Janus
\citep{chen2024janus}. Matched Genuine-Private LoRA results are available for
five model settings; for Qwen3-4B, a matched Genuine-Private LoRA run is not available, so we report an unmatched private-supervision result only as a descriptive reference.
We additionally evaluate a parameter-free Janus-style in-context learning
baseline on GPT-2 Medium and Large, with the full shot-wise results reported
in the appendix.
These baselines are chosen to match the post-adaptation recovery setting
studied here. Inference-time extraction attacks instead operate on a fixed
checkpoint without performing model adaptation, and therefore address a
different attack interface rather than providing a directly matched comparison
for \method . We use Hidden to represent this pre-adaptation accessibility,
Genuine-Private LoRA to represent recovery with genuine private supervision,
and Matched Random to isolate the effect of \method 's candidate-selection rule.

\paragraph{Metrics and Protocol.}
Our primary metric is target-association recovery, defined in Section~3 as
the fraction of the 100 target answers that occur in the corresponding
generated continuations after case normalization, whitespace trimming, and
whitespace collapsing. Thus, one percentage point corresponds to one
additional recovered target association. Unless otherwise stated, we use one
Generate-Select-Adapt round with $K=50$ selected candidates. Stochastic
results report mean $\pm$ sample standard deviation over seeds 42, 43, and 44.
Ground-truth target answers are used only for final evaluation and post-hoc
auditing, never for candidate generation, scoring, selection, or adaptation.

\subsection{RQ1: Recovery with Model-Generated Supervision}

Figure~\ref{fig:main_recovery} reports one-round recovery across all six model
settings. \method improves over the unadapted target model in every
setting, adding 6--21 recovered associations per 100 targets. The largest gain
occurs on GPT-2 Medium, where recovery increases from 42/100 to 63/100; GPT-2
Large similarly increases from 55/100 to 72/100. The effect also appears on
OPT and Qwen3, although its magnitude varies across models.

To quantify the cost of removing genuine private supervision, we additionally
compare with a Genuine-Private Fine-Tuning setting, where the attacker is
provided with the actual private associations. This setting provides genuine auxiliary private associations and serves as a stronger-supervision reference outside the ReGap threat model. While genuine
private supervision achieves higher recovery overall, \method closes much of
the gap on GPT-2 Medium and Large, with differences of 3.3 and 5.0 percentage
points, respectively. These results show that model-generated supervision can
recover substantial hidden associations without access to genuine private
records.
Full per-seed and ICL results are reported in the appendix.

\begin{figure}[t]
    \centering
    \includegraphics[width=\linewidth]{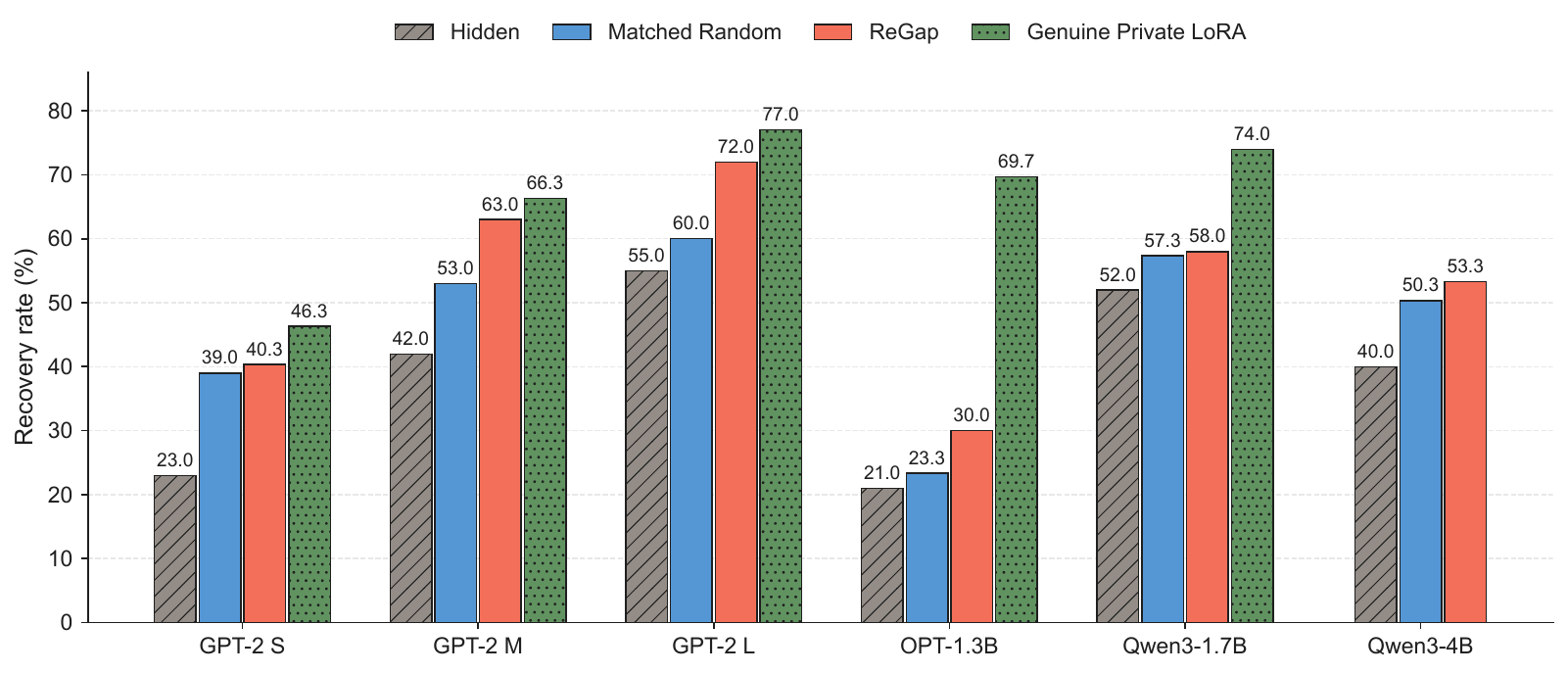}
    \vspace{-2em}
    \caption{Target-association recovery across six model settings.}
    \label{fig:main_recovery}
\end{figure}

\subsection{RQ2: How Much Target-Specific Information Is Required?}
\begin{wraptable}{r}{0.4\linewidth}
\vspace{-10pt}
\small
\setlength{\tabcolsep}{5pt}
\renewcommand{\arraystretch}{1.2}
\aboverulesep=0ex
\belowrulesep=0.5ex
\caption{Unseen-identity recovery.}
\label{tab:unseen_identity}
\begin{tabular}{lccc}
\toprule
Model & Hidden & \method & Gain \\
\midrule
GPT-2 M & 42.0 & $57.3\pm1.5$ & +15.3 \\
GPT-2 L & 55.0 & $69.0\pm1.0$ & +14.0 \\
\bottomrule
\end{tabular}
\end{wraptable}
We construct an Unseen-Identity Control whose adaptation identities are
disjoint from all memorized and evaluation identities, while preserving the
task schema and answer format. 
As shown in Table~\ref{tab:unseen_identity}, \method improves recovery on both
GPT-2 Medium and GPT-2 Large, increasing recovery by 15.3 and 14.0 percentage
points over the target-model baseline, respectively. 
Across all evaluated seeds, the adaptation identities have zero overlap with the memorized and
evaluation identities, and no exact target answers appear in either the
generated candidate pools or selected examples ($0/1500$ candidate-pool hits
and $0/150$ selected-example hits for each model). Thus, on GPT-2 Medium and Large, recovery improves without using target identities to construct the adaptation data and without exact target-answer replay in either the generated candidate pools or selected supervision. This rules out direct identity reuse and exact-answer copying as explanations for these gains, while leaving open the contribution of shared person--email regularities.

\paragraph{Query-information ablation.}
\begin{wraptable}{l}{0.35\linewidth}
\vspace{-10pt}
\centering
\label{tab:query_ablation_main}
\small
\textbf{Table~\thetable:} Query ablation.
\vspace{0.3em}
\setlength{\tabcolsep}{9pt}
\renewcommand{\arraystretch}{1.2}
\aboverulesep=0ex
\belowrulesep=0.5ex
\begin{tabular}{lc}
\toprule
Condition & Recovery (\%) \\
\midrule
No adaptation & 42.0 \\
Correct query & $63.0 \pm 2.6$ \\
Shuffled query & $63.0 \pm 1.0$ \\
Generic query & $41.0 \pm 0.0$ \\
\bottomrule
\end{tabular}
\vspace{-10pt}
\end{wraptable}
A matched ablation further clarifies what information is required for
recovery. As shown in Table~\ref{tab:query_ablation_main}, shuffled
entity-conditioned queries perform nearly identically to the original queries,
whereas replacing them with a generic query reduces recovery to the
pre-adaptation level. Thus, recovery does not require the exact target--query
pairing, but benefits from preserving task- and entity-conditioned query information.

\paragraph{Prompt robustness.}
\begin{wraptable}{r}{0.35\linewidth}
\vspace{-5pt}
\centering
\refstepcounter{table}
\label{tab:prompt_robustness_main}
\small
\textbf{Table~\thetable:} Prompt robustness.
\vspace{0.3em}
\setlength{\tabcolsep}{5pt}
\renewcommand{\arraystretch}{1.2}
\aboverulesep=0ex
\belowrulesep=0.5ex
\begin{tabular}{lc}
\toprule
Prompt & Recovery (\%) \\
\midrule
What (standard) & $63.0 \pm 2.6$ \\
How & $61.7 \pm 1.2$ \\
Paraphrase A & $61.7 \pm 1.2$ \\
Paraphrase B & $54.7 \pm 2.3$ \\
Chinese & $41.7 \pm 0.6$ \\
\bottomrule
\end{tabular}
\vspace{-10pt}
\end{wraptable}
We further vary the surface form of the generation query under the same
one-round Original-Loss protocol. As shown in
Table~\ref{tab:prompt_robustness_main}, several English reformulations retain
substantial recovery. The standard ``What'' prompt reaches
$63.0\pm2.6\%$, while the ``How'' formulation and English paraphrase A both
reach $61.7\pm1.2\%$. English paraphrase B is weaker at
$54.7\pm2.3\%$, whereas the Chinese prompt remains near the
pre-adaptation level at $41.7\pm0.6\%$. Thus, recovery is robust to several
English reformulations, but is not invariant to prompt wording.
Taken together, these controls narrow the information requirements of
\method. Exact target identities and exact target--query pairings are not
necessary for substantial recovery in the evaluated structured setting.
However, removing most entity-conditioned query information eliminates the
gain, and recovery remains sensitive to prompt formulation.

\subsection{RQ3: Does Candidate Selection Matter?}

\method and Matched Random differ only in how adaptation examples are
selected. As shown in Figure~\ref{fig:main_recovery}, \method achieves
higher mean recovery in all six settings, with gains over Matched Random
ranging from 0.7 to 12.0 percentage points. However, Matched Random itself
improves over the pre-adaptation target model in every setting. Candidate
generation and adaptation therefore account for a substantial part of the
recovery effect, while frozen-model selection provides an additional,
model-dependent amplification.
We therefore view candidate selection as an amplifier rather than the sole
source of recovery. Its benefit varies substantially across models, and we do
not claim universal or statistically significant superiority over random
selection.

\subsection{Matched Prior-Exposure Control}
\label{sec:prior_exposure}

To distinguish recovery from ordinary relearning, we apply the same trained
adapters to a matched model checkpoint that never encountered the 100 target
associations. No candidates are regenerated and no adapters are retrained, so
the adaptation updates are fixed while prior target exposure changes.
\begin{wraptable}{r}{0.43\linewidth}
\vspace{-5pt}
\centering
\small
\setlength{\tabcolsep}{4pt}
\renewcommand{\arraystretch}{1.2}
\aboverulesep=0ex
\belowrulesep=0.5ex
\caption{Matched prior-exposure control.}
\vspace{3pt}
\label{tab:prior_exposure}
\begin{tabular}{lccc}
\toprule
Checkpoint & Hidden & Random & \method \\
\midrule
Prev. exposed & 42.0 & 53.0 & 63.0 \\
Target-unexp. & 13.0 & 13.0 & 13.0 \\
\bottomrule
\end{tabular}
\end{wraptable}
As shown in Table~\ref{tab:prior_exposure}, the same adapters increase
recovery from 42.0\% to 63.0\% on the previously exposed model checkpoint, but
produce no gain on the matched target-unexposed control, which remains at
13.0\%.
Matched Random shows the same qualitative dependence on prior exposure.
Thus, adaptation alone is insufficient to explain the observed gain:
post-adaptation recovery depends strongly on whether the model checkpoint had
previously encountered the target associations.

\section{Discussion}

\paragraph{Interpretation.}
Our results reveal a distinction between what a model exposes at a fixed
checkpoint and what can become accessible after subsequent adaptation.
The matched prior-exposure control is particularly informative: applying the
same trained adapters increases recovery on a checkpoint that previously
encountered the targets, but produces no gain on a matched checkpoint that
never encountered them. Adaptation alone is therefore insufficient to explain
the observed recovery. At the same time, this evidence is behavioral.
Reduced direct recovery does not establish that the corresponding associations
have been deleted, and successful post-adaptation recovery does not imply that
individual memories remain internally intact or unchanged. \method instead shows
that prior exposure can affect later recoverability under lightweight
adaptation. The Unseen-Identity Control further shows that this effect need not
depend on target-specific identities or exact target-answer replay, although
knowledge of the underlying task structure remains available.

\paragraph{Boundaries, limitations, and defense implications.}
Our main evaluation focuses on structured Enron-derived person-email associations, whose identities and answers contain exploitable lexical and compositional regularities. Recovery weakens sharply
when this structure is removed in the opaque random-binding control, indicating
that model-generated supervision does not provide a generic mechanism for
recovering arbitrary hidden associations. Recoverability also depends on how
direct accessibility is reduced. Whereas our main target models are
produced by non-target continual training, explicit NPO unlearning yields almost
no additional recovery after \method : recovery changes from 71.0\% to 71.3\%
under the moderate setting and remains at 32.0\% under the stronger setting.
This result identifies an important boundary rather than establishing general
robustness to unlearning, but it also suggests that post-training mechanisms
that directly suppress target-specific information may be more resistant to
later recovery than ordinary non-target training. More broadly, \method assumes
knowledge of the association task and expected answer format, and our
experiments focus on open-source models that permit token-level scoring and
parameter-efficient adaptation. Extending the analysis to less structured
private information, additional unlearning and privacy-preserving defenses,
and restricted fine-tuning interfaces remains future work.

\section{Conclusion}
We show that fine-tuning can restore access to previously learned private
associations without genuine private supervision. Our attack, \method,
constructs supervision from the target model's own generations, selects
candidates using answer-token likelihood, and fine-tunes on the selected
examples. Across GPT-2, OPT, and Qwen3 settings, a single round improves
target-association recovery by 6-21 percentage points. Recovery also persists
in controls with disjoint adaptation identities, where post-hoc audits find
no exact target answers in either the candidate pools or selected supervision.
These results demonstrate that recovery does not necessarily require
reintroducing the target records during adaptation.
Our controls further clarify the conditions under which this recovery occurs.
A matched prior-exposure comparison supports a role for the model's training
history, while recovery weakens sharply for opaque random bindings and after
explicit NPO unlearning. The effectiveness of model-generated supervision
thus depends on the associations being recovered and the updates the model
has undergone. Nevertheless, the absence of externally supplied private
records is insufficient to rule out recovery. Privacy evaluation should
therefore assess both what a released model reveals directly and whether
subsequent customization can make previously learned information accessible
again, including when the customization data come from the model itself.

\section*{AI Use Statement}

Generative AI tools were used during this work to assist with research
discussion, experimental design feedback, code development and debugging,
result interpretation, translation, and manuscript drafting and editing.
They were also used to help organize experimental records and improve the
clarity and structure of the presentation.

All AI-assisted outputs were manually reviewed by the authors. Experimental
results, numerical claims, methodological descriptions, and citations were
checked against the corresponding code, logs, source files, and original
references before inclusion in the manuscript. The authors take full
responsibility for the final content of the paper.

\bibliography{iclr2027_conference}
\bibliographystyle{iclr2027_conference}


\clearpage
\appendix

\section*{Appendix}

\section*{Table of Contents}

\noindent
\textbf{A}\hspace{1em}
\hyperref[app:data]{\textbf{Benchmark and Data Construction}}
\dotfill
\pageref{app:data}
\par

\vspace{0.3em}

\noindent\hspace{2.3em}
A.1\hspace{0.7em}
\hyperref[app:enron_benchmark]{Enron-Derived Person-Email Benchmark}
\dotfill
\pageref{app:enron_benchmark}
\par

\noindent\hspace{2.3em}
A.2\hspace{0.7em}
\hyperref[app:data_splits]{Data Splits and Experimental Roles}
\dotfill
\pageref{app:data_splits}
\par

\noindent\hspace{2.3em}
A.3\hspace{0.7em}
\hyperref[app:unseen_identity]{Unseen-Identity Control}
\dotfill
\pageref{app:unseen_identity}
\par

\noindent\hspace{2.3em}
A.4\hspace{0.7em}
\hyperref[app:opaque_control]{Opaque Random-Binding Control}
\dotfill
\pageref{app:opaque_control}
\par

\noindent\hspace{2.3em}
A.5\hspace{0.7em}
\hyperref[app:additional_enron]{Additional Enron Controls}
\dotfill
\pageref{app:additional_enron}
\par

\vspace{0.5em}

\noindent
\textbf{B}\hspace{1em}
\hyperref[app:implementation]{\textbf{Main Protocol and Implementation Details}}
\dotfill
\pageref{app:implementation}
\par

\vspace{0.3em}

\noindent\hspace{2.3em}
B.1\hspace{0.7em}
\hyperref[app:post-training model]{Post-Training Model Construction}
\dotfill
\pageref{app:post-training model}
\par

\noindent\hspace{2.3em}
B.2\hspace{0.7em}
\hyperref[app:candidate_generation]{Candidate Generation}
\dotfill
\pageref{app:candidate_generation}
\par

\noindent\hspace{2.3em}
B.3\hspace{0.7em}
\hyperref[app:candidate_selection]{Original-Loss Candidate Selection}
\dotfill
\pageref{app:candidate_selection}
\par

\noindent\hspace{2.3em}
B.4\hspace{0.7em}
\hyperref[app:lora_adaptation]{LoRA Adaptation}
\dotfill
\pageref{app:lora_adaptation}
\par

\noindent\hspace{2.3em}
B.5\hspace{0.7em}
\hyperref[app:lora_modules]{Model-Specific LoRA Target Modules}
\dotfill
\pageref{app:lora_modules}
\par

\noindent\hspace{2.3em}
B.6\hspace{0.7em}
\hyperref[app:random_baseline]{Random Baseline and Fairness Controls}
\dotfill
\pageref{app:random_baseline}
\par

\noindent\hspace{2.3em}
B.7\hspace{0.7em}
\hyperref[app:evaluation_protocol]{Evaluation Protocol}
\dotfill
\pageref{app:evaluation_protocol}
\par

\noindent\hspace{2.3em}
B.8\hspace{0.7em}
\hyperref[app:protocol_summary]{Main-Protocol Summary}
\dotfill
\pageref{app:protocol_summary}
\par

\vspace{0.5em}

\noindent
\textbf{C}\hspace{1em}
\hyperref[app:full_results]{\textbf{Full Recovery Results}}
\dotfill
\pageref{app:full_results}
\par

\vspace{0.3em}

\noindent\hspace{2.3em}
C.1\hspace{0.7em}
\hyperref[app:per_seed]{Per-Seed Results Across Six Models}
\dotfill
\pageref{app:per_seed}
\par

\noindent\hspace{2.3em}
C.2\hspace{0.7em}
\hyperref[app:multiround]{Multi-Round Recovery Dynamics}
\dotfill
\pageref{app:multiround}
\par

\noindent\hspace{2.3em}
C.3\hspace{0.7em}
\hyperref[app:lora_fullft]{LoRA vs.\ Full Fine-Tuning}
\dotfill
\pageref{app:lora_fullft}
\par

\noindent\hspace{2.3em}
C.4\hspace{0.7em}
\hyperref[app:seed_stability]{Random-Seed Consistency}
\dotfill
\pageref{app:seed_stability}
\par

\noindent\hspace{2.3em}
C.5\hspace{0.7em}
\hyperref[app:additional_baselines]{Genuine-Private and ICL Baselines}
\dotfill
\pageref{app:additional_baselines}
\par


\vspace{0.5em}

\noindent
\textbf{D}\hspace{1em}
\hyperref[app:posthoc]{\textbf{Post-hoc Analysis}}
\dotfill
\pageref{app:posthoc}
\par

\vspace{0.3em}

\noindent\hspace{2.3em}
D.1\hspace{0.7em}
\hyperref[app:gradient_analysis]{Candidate Compatibility and Gradient Alignment}
\dotfill
\pageref{app:gradient_analysis}
\par

\noindent\hspace{2.3em}
D.2\hspace{0.7em}
\hyperref[app:set_gradient]{Set-Level Gradient Comparison}
\dotfill
\pageref{app:set_gradient}
\par

\noindent\hspace{2.3em}
D.3\hspace{0.7em}
\hyperref[app:selector_analysis]{Additional Selector Analysis}
\dotfill
\pageref{app:selector_analysis}
\par

\vspace{0.5em}

\noindent
\textbf{E}\hspace{1em}
\hyperref[app:boundaries]{\textbf{Boundary Experiments}}
\dotfill
\pageref{app:boundaries}
\par

\vspace{0.3em}

\noindent\hspace{2.3em}
E.1\hspace{0.7em}
\hyperref[app:opaque]{Opaque Random Bindings}
\dotfill
\pageref{app:opaque}
\par

\noindent\hspace{2.3em}
E.2\hspace{0.7em}
\hyperref[app:controlled_enron]{Controlled Enron Associations}
\dotfill
\pageref{app:controlled_enron}
\par

\noindent\hspace{2.3em}
E.3\hspace{0.7em}
\hyperref[app:non_enron]{Non-Enron-Domain Associations}
\dotfill
\pageref{app:non_enron}
\par

\noindent\hspace{2.3em}
E.4\hspace{0.7em}
\hyperref[app:npo_defense]{Explicit NPO Unlearning Stress Test}
\dotfill
\pageref{app:npo_defense}
\par

\vspace{0.3em}
\noindent
\textbf{F}\hspace{1em}
\hyperref[app:current_boundary]{\textbf{Current-Protocol Boundary Controls}}
\dotfill
\pageref{app:current_boundary}
\par

\vspace{0.3em}
\noindent\hspace{2.3em}
F.1\hspace{0.7em}
\hyperref[app:opaque_current]{Opaque Random Binding (Current R1)}
\dotfill
\pageref{app:opaque_current}
\par

\vspace{0.3em}
\noindent\hspace{2.3em}
F.2\hspace{0.7em}
\hyperref[app:controlled_current]{Controlled Enron (Current R1)}
\dotfill
\pageref{app:controlled_current}
\par

\vspace{0.5em}

\noindent
\textbf{G}\hspace{1em}
\hyperref[app:audits]{\textbf{Leakage and Reproducibility Audits}}
\dotfill
\pageref{app:audits}
\par

\vspace{0.3em}

\noindent\hspace{2.3em}
G.1\hspace{0.7em}
\hyperref[app:identity_audit]{Identity-Overlap Audit}
\dotfill
\pageref{app:identity_audit}
\par

\noindent\hspace{2.3em}
G.2\hspace{0.7em}
\hyperref[app:replay_audit]{Exact Target-Answer Replay Audit}
\dotfill
\pageref{app:replay_audit}
\par

\noindent\hspace{2.3em}
G.3\hspace{0.7em}
\hyperref[app:protocol_audit]{Frozen-Scorer and Fresh-Adapter Verification}
\dotfill
\pageref{app:protocol_audit}
\par

\noindent\hspace{2.3em}
G.4\hspace{0.7em}
\hyperref[app:seed_audit]{Seed and Protocol Consistency}
\dotfill
\pageref{app:seed_audit}
\par

\noindent\hspace{2.3em}
G.5\hspace{0.7em}
\hyperref[app:provenance]{Result Provenance}
\dotfill
\pageref{app:provenance}
\par

\noindent
\textbf{H}\hspace{1em}
\hyperref[app:theory]{\textbf{Theoretical Analysis and Proofs}}
\dotfill
\pageref{app:theory}
\par

\vspace{0.3em}
\noindent\hspace{2.3em}
H.1\hspace{0.7em}
\hyperref[app:theory_setup]{Analytical Setup}
\dotfill
\pageref{app:theory_setup}
\par

\vspace{0.3em}
\noindent\hspace{2.3em}
H.2\hspace{0.7em}
\hyperref[app:proof_local]{Proof of Theorem}
\dotfill
\pageref{app:proof_local}
\par

\vspace{0.3em}
\noindent\hspace{2.3em}
H.3\hspace{0.7em}
\hyperref[app:multistep_bound]{Actual Updates and Multi-Step Fine-Tuning}
\dotfill
\pageref{app:multistep_bound}
\par

\vspace{0.3em}
\noindent\hspace{2.3em}
H.4\hspace{0.7em}
\hyperref[app:shared_structure]{Shared Gradient Structure Without Identity Overlap}
\dotfill
\pageref{app:shared_structure}
\par

\vspace{0.3em}
\noindent\hspace{2.3em}
H.5\hspace{0.7em}
\hyperref[app:self_sampling]{Self-Sampling and Answer-Length Weighting}
\dotfill
\pageref{app:self_sampling}
\par

\vspace{0.3em}
\noindent\hspace{2.3em}
H.6\hspace{0.7em}
\hyperref[app:selector_proof]{Proof of Proposition}
\dotfill
\pageref{app:selector_proof}
\par

\vspace{0.3em}
\noindent\hspace{2.3em}
H.7\hspace{0.7em}
\hyperref[app:selector_bound]{From Conditional Ranking to a Finite-Step Loss Bound}
\dotfill
\pageref{app:selector_bound}
\par

\vspace{0.3em}
\noindent\hspace{2.3em}
H.8\hspace{0.7em}
\hyperref[app:margin_crossing]{Connecting Loss Changes to Decoded Recovery}
\dotfill
\pageref{app:margin_crossing}
\par
\clearpage
\section{Benchmark and Data Construction}
\label{app:data}

This appendix provides additional details on the privacy benchmark and the
controlled variants used throughout our experiments. Our main benchmark is
constructed from real email addresses extracted from the Enron email corpus,
while the corresponding person identifiers are deterministically derived from
the email strings. We additionally construct several controlled variants to
separate recovery driven by target-specific exposure from recovery enabled by
the broader person-email structure.

\subsection{Enron-Derived Person-Email Benchmark}
\label{app:enron_benchmark}

We construct the main privacy benchmark from the Enron email corpus
\citep{klimt2004enron}, using the Enron data released through the LLM-PBE
pipeline \citep{li2024llm}.
We concatenate the available textual fields for each record and extract email
addresses using a deterministic regular-expression-based pipeline. Email
addresses are lowercased and deduplicated before constructing structured
associations.

For each extracted email address, we derive a corresponding person identifier
from the local part of the address. Specifically, the local part is split on
common separators such as periods, underscores, and hyphens, and up to three
resulting tokens are converted into title-cased name components. For example,
an address of the form
\texttt{alice.smith@domain.com}
is converted into an association resembling
\texttt{Alice Smith $\rightarrow$ alice.smith@domain.com}.

This construction has two important properties.
First, the email strings themselves originate from the underlying corpus rather
than being randomly synthesized.
Second, because the person identifiers are mechanically derived from the email
local parts, the resulting benchmark contains lexical and compositional
structure between the query identity and the target answer.
This structure is intentional in the main benchmark and motivates the
additional opaque-binding control described in
Section~\ref{app:opaque_control}.

\subsection{Data Splits and Experimental Roles}
\label{app:data_splits}

The privacy training pool contains 3,000 structured person-email
associations. We partition these associations into three roles:
a small private-reference set, the target recovery set, and a larger set of
additional non-target associations.

The private-reference set contains 50 associations and is used only in
reference experiments that explicitly assume access to genuine private
supervision. It is never used by \method for candidate generation, selection,
or adaptation.

The target set contains 100 associations and defines the private information
whose post-adaptation recoverability we evaluate. These target answers are
withheld from the attacker and are used only for final evaluation and
post-hoc auditing.

The remaining 2,850 associations form the additional non-target pool. A
300-example subset of this pool is used to measure retained performance on
non-target person-email associations. This quantity should be interpreted as
non-target association exact-match performance rather than as a general
measure of downstream model utility.

\begin{table}[t]
\centering
\small
\caption{
Summary of the main Enron-derived benchmark splits.
Target answers are unavailable during the \method attack and are used only for
evaluation and post-hoc auditing.
}
\label{tab:appendix_data_splits}
\begin{tabular}{lrl}
\toprule
Split & Size & Role \\
\midrule
Private-reference set
& 50
& Genuine-private reference only \\

Target associations
& 100
& Recovery evaluation \\

Additional non-target pool
& 2,850
& Non-target training/evaluation pool \\

Non-target evaluation subset
& 300
& Association-level retention evaluation \\
\midrule
Total privacy training pool
& 3,000
& Structured person-email associations \\
\bottomrule
\end{tabular}
\end{table}

To reduce direct accessibility of the target associations, we continue training the original model on non-private text from WikiText-2. The target associations are absent from this continual-training corpus. We refer to the resulting model as the post-training model. In our main experiments, its direct target-association recovery is substantially reduced relative to the original model. This reduction is behavioral and does not imply that the corresponding information has been deleted or remains internally unchanged.

\subsection{Unseen-Identity Control}
\label{app:unseen_identity}

The primary attack setting assumes that the attacker knows the identities in
the target queries while not knowing their private email addresses.
To test whether these target-specific identities are necessary for recovery,
we construct an \emph{Unseen-Identity Control}.

In this control, candidate generation and adaptation use independently
generated identities that are disjoint from the identities appearing in the
private-reference set, target set, non-target evaluation set, and the full
memorized privacy pool.
The person-email task schema and expected answer format remain unchanged.
Thus, the control removes target identity overlap while preserving the broader
structure of the task.

We explicitly audit identity and answer overlap after candidate generation.
Across all evaluated seeds for GPT-2 Medium and GPT-2 Large, the unseen
identities have zero overlap with memorized or evaluation identities.
Moreover, neither the generated candidate pools nor the selected adaptation
examples contain any exact target answer.
These checks allow us to rule out exact target-answer replay in this control,
while still acknowledging that the attacker retains knowledge of the
person-email task structure.

\subsection{Opaque Random-Binding Control}
\label{app:opaque_control}

The main Enron-derived benchmark contains substantial lexical structure:
person identifiers are derived from email local parts, making some
name-answer regularities potentially exploitable even without access to the
correct target answers.
We therefore construct an opaque random-binding control to examine how \method 
behaves when this compositional structure is removed.

In this variant, identities are synthetically generated and independently
paired with random 12-character local parts under the reserved
\texttt{privacy.invalid} domain.
The resulting associations preserve the abstract
person-to-email input/output format while removing the lexical relationship
between a person's name and the corresponding answer.

The opaque benchmark contains 3,000 associations, including 100 target
associations constructed using the same split size as the main benchmark.
Because the answers are independently assigned random strings, successful
recovery cannot exploit the local-part structure present in the Enron-derived
benchmark.
We use this experiment as a structural boundary rather than as a strictly
matched causal ablation, since the opaque branch uses its own independently
trained checkpoints.

\subsection{Additional Enron Controls}
\label{app:additional_enron}

We also construct two additional Enron-derived variants for descriptive
boundary analysis.

\paragraph{Controlled Enron.}
To reduce artifacts introduced by deriving identities directly from email
local parts, we extract sender display names and sender email addresses from
structured message metadata and retain clean one-to-one name-address
associations. The construction scans 490,408 messages, yielding 1,184 raw
candidate pairs and 757 clean one-to-one associations after filtering.
This variant provides a setting in which the person identifier is obtained
independently from the email local-part transformation used in the main
benchmark.

\paragraph{Non-Enron-domain subset.}
We additionally remove addresses belonging to
\texttt{enron.com} and its subdomains, obtaining 714 external-domain
person-email associations.
These examples still originate from messages in the Enron corpus and may
retain lexical relationships between names and email addresses; therefore, we
treat this experiment as a descriptive distributional control rather than as
an unrelated-domain benchmark.

These additional variants are not used to define the headline results in the
main paper. Instead, they help characterize how recovery changes as the
structure and provenance of the person-email associations are modified.
\section{Main Protocol and Implementation Details}
\label{app:implementation}

This appendix provides the implementation details omitted from the main paper,
including construction of the post-training model, candidate generation,
candidate scoring, LoRA adaptation, model-specific target modules, and the
evaluation protocol.

\subsection{Post-training Model Construction}
\label{app:post-training model}

Our experiments begin from a checkpoint that has learned the structured
person-email associations described in Appendix~\ref{app:data}.
We then perform non-target continual training on WikiText-2
\citep{merity2016pointer}, without including any of the target associations.
The resulting model is the post-training model and serves as the target model for candidate generation, candidate scoring, and subsequent adaptation. In our main experiments, its direct target-association recovery is substantially reduced relative to the original model. This reduction is behavioral and does not imply that the target associations have been deleted from the model or remain internally unchanged.

For the prior-exposure control in Section~\ref{sec:prior_exposure}, we
additionally construct a matched target-unexposed checkpoint using the same
overall training recipe while excluding the target associations from its entire
training history. The same trained adapters are transferred to this checkpoint
without additional candidate generation, candidate selection, or adapter
training.

\subsection{Candidate Generation}
\label{app:candidate_generation}

Candidate generation uses the frozen target model itself as the proposer.
For each query, we generate one greedy completion together with five
stochastically sampled completions.
The decoding configuration is fixed across the main experiments and is
summarized in Table~\ref{tab:generation_config}.

\begin{table}[t]
\centering
\small
\caption{
Candidate-generation configuration used in the main experiments.
The target answers are never consulted when constructing or filtering the
candidate pool.
}
\label{tab:generation_config}
\begin{tabular}{lc}
\toprule
Configuration & Value \\
\midrule
Greedy generations per query & 1 \\
Sampled generations per query & 5 \\
Maximum new tokens & 64 \\
Temperature & 0.7 \\
Top-$p$ & 0.9 \\
Repetition penalty & 1.05 \\
Target-answer verification & None \\
\bottomrule
\end{tabular}
\end{table}

Generated outputs are filtered for invalid responses, malformed answers,
duplicate generations, and refusal-style outputs before candidate scoring.
Importantly, this filtering depends only on output validity and format.
A generated candidate is never accepted or rejected based on whether it
matches the true target answer.

In the primary setting, generation queries contain the target identity but not
its private answer.
In the Unseen-Identity Control, the same generation procedure is applied to
identities that are disjoint from all memorized and evaluation identities.

\subsection{Original-Loss Candidate Selection}
\label{app:candidate_selection}

The main paper uses a single candidate-selection rule throughout:
\emph{answer-token average negative log-likelihood under the frozen target model}.
For a prompt-candidate pair, prompt tokens are masked and only candidate
answer tokens contribute to the score.
Candidates with lower average negative log-likelihood are ranked more highly.

The same frozen target model is used both to generate and score
candidates.
No target-answer correctness labels, target gradients, or recovery-set
performance are available to the selector.
For the main one-round protocol, we retain the top
\[
K = 50
\]
lowest-loss candidates.

This selector corresponds to the \emph{Original-Loss} configuration used in
all headline results.
Earlier experiments explored alternative selectors based on local loss-basin
statistics and normalized relative gaps.
Those historical variants are reported only where explicitly labeled in the
appendix and are not used for the main six-model results.

\subsection{LoRA Adaptation}
\label{app:lora_adaptation}

Each adaptation run starts from the same frozen target model and
initializes a fresh LoRA adapter \citep{hu2021lora}.
Adapters are never continued across random seeds or reused across independent
main experiments.

Training uses causal cross-entropy only on the generated answer tokens.
Prompt tokens are masked from the adaptation loss.
The main hyperparameters are shown in
Table~\ref{tab:lora_hparams}.

\begin{table}[t]
\centering
\small
\caption{
LoRA adaptation configuration used for the main one-round experiments.
}
\label{tab:lora_hparams}
\begin{tabular}{lc}
\toprule
Hyperparameter & Value \\
\midrule
LoRA rank & 16 \\
LoRA $\alpha$ & 32 \\
LoRA dropout & 0.05 \\
Learning rate & $5\times10^{-5}$ \\
Epochs & 3 \\
Batch size & 1 \\
Gradient accumulation & 8 \\
Gradient clipping & 1.0 \\
Selected examples & 50 \\
Training loss & Answer-token causal CE \\
Prompt tokens in loss & Masked \\
\bottomrule
\end{tabular}
\end{table}

The main paper reports a single Generate-Select-Adapt round.
Additional rounds are evaluated separately in
Appendix~\ref{app:multiround}, where repeated adaptation does not consistently
improve recovery across models.

\subsection{Model-Specific LoRA Target Modules}
\label{app:lora_modules}

We apply LoRA to the attention and feed-forward projections exposed by each
model family.
The target modules are listed in
Table~\ref{tab:lora_modules}.

\begin{table}[t]
\centering
\small
\caption{
Model-specific LoRA target modules.
All models use the same LoRA rank, scaling factor, dropout, and optimization
configuration unless otherwise specified.
}
\label{tab:lora_modules}
\begin{tabular}{ll}
\toprule
Model family & LoRA target modules \\
\midrule
GPT-2
& \texttt{c\_attn}, \texttt{c\_proj}, \texttt{c\_fc} \\

OPT
& \texttt{q\_proj}, \texttt{k\_proj}, \texttt{v\_proj},
  \texttt{out\_proj}, \texttt{fc1}, \texttt{fc2} \\

Qwen3
& \texttt{q\_proj}, \texttt{k\_proj}, \texttt{v\_proj},
  \texttt{o\_proj}, \texttt{gate\_proj}, \texttt{up\_proj},
  \texttt{down\_proj} \\
\bottomrule
\end{tabular}
\end{table}

The evaluated model suite consists of GPT-2 Small, GPT-2 Medium, GPT-2 Large
\citep{radford2019language}, OPT-1.3B \citep{zhang2022opt},
Qwen3-1.7B, and Qwen3-4B \citep{yang2025qwen3}.

\subsection{Random Baseline and Fairness Controls}
\label{app:random_baseline}

The matched Random Selection baseline uses candidates from the same generated
candidate pool as \method and applies the same adaptation budget and LoRA
configuration.
The only difference is the candidate-selection rule:
instead of ranking examples by frozen-model answer loss, the baseline samples
the adaptation examples randomly from the corresponding candidate pool.

This design holds fixed the target model, generated candidate pool,
number of selected examples, LoRA initialization procedure, number of epochs,
and optimization hyperparameters.
Consequently, the comparison isolates the contribution of frozen-model
candidate selection from the broader effect of adapting on model-generated
supervision.

Each stochastic condition is evaluated with random seeds
\[
42,\;43,\;44.
\]
A fresh adapter is initialized independently for every seed.

\subsection{Evaluation Protocol}
\label{app:evaluation_protocol}

Recovery is evaluated on the same 100 target associations used throughout the
main experiments. For each target query, the input prompt is removed from the
model output and the remaining full generated continuation is retained for
evaluation. Evaluation uses greedy decoding with up to 48 newly generated
tokens.

Both the generated continuation and the reference answer are normalized by
lowercasing, trimming leading and trailing whitespace, and collapsing
consecutive whitespace characters into a single space. A target is counted as
recovered when the resulting nonempty normalized reference string occurs as a
substring of the normalized generated continuation. Formally, the implemented
decision rule is
\[
\operatorname{hit}_i
=
\mathbf{1}\!\left[
\operatorname{Norm}(y_i)\neq\epsilon
\;\land\;
\operatorname{Norm}(y_i)
\text{ occurs in }
\operatorname{Norm}(\hat y_i)
\right].
\]

No email-specific regular expression is used to determine recovery.
Email-format matching is computed separately and does not affect the primary
recovery metric.

For stochastic adaptation results, we report the mean and sample standard
deviation over seeds 42, 43, and 44. Ground-truth target answers are isolated
from the attack pipeline and are used only for final recovery evaluation and
post-hoc audits such as exact-answer replay analysis. They are not used for
candidate generation, candidate scoring, candidate selection, or LoRA
adaptation.

For the prior-exposure experiment, the previously exposed checkpoint and the
matched target-unexposed checkpoint are evaluated on the same 100 target
associations using the same decoding and recovery-evaluation procedure.
The previously exposed checkpoint increases from 42.0\% to 63.0\% after
\method adaptation, whereas the target-unexposed checkpoint remains at
13.0\%.

\subsection{Main-Protocol Summary}
\label{app:protocol_summary}

For clarity, Table~\ref{tab:protocol_summary} summarizes the information
available at each stage of the main \method pipeline.

\begin{table}[t]
\centering
\small
\caption{
Information available to the attacker in the main \method protocol.
A checkmark indicates information used by the corresponding stage.
}
\label{tab:protocol_summary}
\begin{tabular}{lccc}
\toprule
Information & Generate & Select & Adapt \\
\midrule
Task structure & \checkmark & \checkmark & \checkmark \\
Target identities & \checkmark & \checkmark & \checkmark \\
Target model & \checkmark & \checkmark & \checkmark \\
Generated candidates & - & \checkmark & \checkmark \\
Target answers & $\times$ & $\times$ & $\times$ \\
Real private auxiliary records & $\times$ & $\times$ & $\times$ \\
Target correctness labels & $\times$ & $\times$ & $\times$ \\
Target gradients & $\times$ & $\times$ & $\times$ \\
\bottomrule
\end{tabular}
\end{table}

The Unseen-Identity Control further removes target identities from adaptation,
while retaining the task structure and expected answer format.
\section{Full Recovery Results}
\label{app:full_results}

This section provides the complete recovery results underlying the aggregate
numbers reported in the main paper. We first report per-seed one-round results
for all six evaluated models, then examine multi-round dynamics, compare LoRA
with full-parameter fine-tuning, and summarize additional random-seed
diagnostics.

\subsection{Per-Seed Results Across Six Models}
\label{app:per_seed}

Table~\ref{tab:appendix_per_seed} reports the individual recovery rates for
each of the three random seeds used in the main one-round evaluation.
The aggregate values exactly reproduce the means and sample standard deviations
reported in the main paper.

\begin{table*}[t]
\centering
\small
\caption{
Per-seed one-round recovery under the current Original-Loss protocol.
All values are recovery percentages on the same 100 target associations.
Random Selection and \method use identical candidate pools, adaptation budgets,
and LoRA configurations, differing only in the candidate-selection rule.
}
\label{tab:appendix_per_seed}
\begin{tabular}{lcccccccc}
\toprule
& & \multicolumn{3}{c}{Random Selection} &
\multicolumn{3}{c}{ReGap} \\
\cmidrule(lr){3-5}
\cmidrule(lr){6-8}
Model & Hidden & Seed 42 & Seed 43 & Seed 44 &
Seed 42 & Seed 43 & Seed 44 & \method Mean \\
\midrule
GPT-2 Small
& 23.0 & 42.0 & 38.0 & 37.0
& 40.0 & 39.0 & 42.0 & 40.3 \\

GPT-2 Medium
& 42.0 & 55.0 & 51.0 & 53.0
& 60.0 & 64.0 & 65.0 & 63.0 \\

GPT-2 Large
& 55.0 & 63.0 & 59.0 & 58.0
& 72.0 & 71.0 & 73.0 & 72.0 \\

OPT-1.3B
& 21.0 & 25.0 & 21.0 & 24.0
& 28.0 & 33.0 & 29.0 & 30.0 \\

Qwen3-1.7B
& 52.0 & 57.0 & 59.0 & 56.0
& 59.0 & 59.0 & 56.0 & 58.0 \\

Qwen3-4B
& 40.0 & 49.0 & 50.0 & 52.0
& 54.0 & 53.0 & 53.0 & 53.3 \\
\bottomrule
\end{tabular}
\end{table*}

The per-seed results show that recovery improvements are not driven by a
single favorable run. \method exceeds the corresponding target model in
every evaluated seed across all six model settings. Matched Random Selection
also improves over Hidden in most settings, reinforcing the observation from
the main paper that model-generated supervision itself contains useful recovery
signal.

The additional benefit from loss-based selection remains model dependent.
The mean \method-Random difference is 1.3 percentage points on GPT-2 Small,
10.0 on GPT-2 Medium, 12.0 on GPT-2 Large, 6.7 on OPT-1.3B, 0.7 on
Qwen3-1.7B, and 3.0 on Qwen3-4B. We therefore interpret the selector as an
additional filtering mechanism rather than as the sole source of recovery.

\subsection{Multi-Round Recovery Dynamics}
\label{app:multiround}

The main paper adopts a single Generate-Select-Adapt round.
To examine whether repeated updates provide additional benefit, we also evaluate
the Original-Loss procedure for up to three rounds.
Table~\ref{tab:multiround_results} reports the mean recovery at each round, and
Figure~\ref{fig:appendix_multiround} visualizes the corresponding multi-round dynamics.

\begin{table}[t]
\centering
\small
\caption{
Mean recovery across three adaptation rounds using the Original-Loss selector.
Additional rounds are not uniformly beneficial across models.
}
\label{tab:multiround_results}
\begin{tabular}{lccc}
\toprule
Model & Round 1 & Round 2 & Round 3 \\
\midrule
GPT-2 Small  & 40.3 & 40.7 & 38.7 \\
GPT-2 Medium & 63.0 & 61.0 & 60.0 \\
GPT-2 Large  & 72.0 & 69.7 & 68.0 \\
OPT-1.3B     & 30.0 & 26.7 & 26.7 \\
Qwen3-1.7B   & 58.0 & 58.7 & 57.7 \\
Qwen3-4B     & 53.3 & 52.0 & 51.0 \\
\bottomrule
\end{tabular}
\end{table}

\begin{figure}[t]
    \centering
    \includegraphics[width=0.7\columnwidth]
    {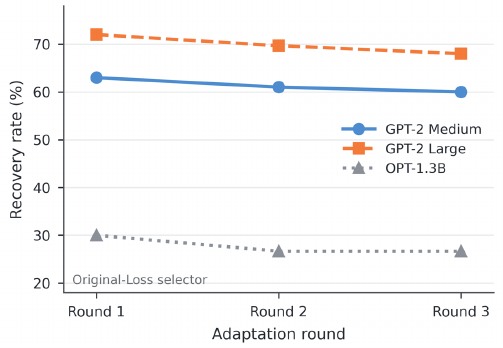}
 \caption{Multi-round recovery under the Original-Loss selector.}
    \label{fig:appendix_multiround}
\end{figure}

\paragraph{Additional rounds do not consistently improve recovery.}
GPT-2 Medium and GPT-2 Large achieve their highest mean recovery after the
first round and decline thereafter. OPT-1.3B similarly decreases after the
first round. GPT-2 Small and Qwen3-1.7B show small second-round changes, but
neither exhibits sustained improvement by Round 3. These results motivate the
one-round protocol used in the main paper: it captures most of the observable
recovery while avoiding additional protocol complexity and non-monotonic
updates.

\subsection{LoRA vs.\ Full Fine-Tuning}
\label{app:lora_fullft}

We additionally examine whether recovery requires full-parameter optimization.
This comparison was conducted in an earlier three-round \method-Abs protocol and
is therefore reported as a diagnostic rather than as a direct extension of the
current one-round Original-Loss results.

\begin{figure}[t]
    \centering
    \includegraphics[width=0.7\columnwidth]
    {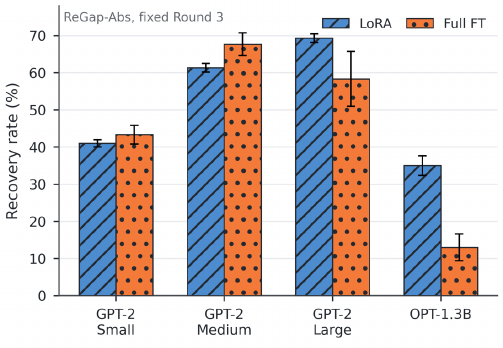}
\caption{LoRA versus full fine-tuning under the historical ReGap-Abs protocol.}
    \label{fig:appendix_lora_fullft}
\end{figure}

\begin{table}[t]
\centering
\small
\caption{
LoRA and full-parameter fine-tuning under the historical \method-Abs protocol.
Values report mean $\pm$ sample standard deviation over three runs.
This experiment uses a different selector and round configuration from the
current main protocol and should be interpreted separately.
}
\label{tab:lora_fullft}
\begin{tabular}{lccc}
\toprule
Model & LoRA & Full FT & Full FT $-$ LoRA \\
\midrule
GPT-2 Small
& 41.0$\pm$1.0
& 43.3$\pm$2.5
& +2.3 \\

GPT-2 Medium
& 61.3$\pm$1.2
& 67.7$\pm$3.1
& +6.4 \\

GPT-2 Large
& 69.3$\pm$1.2
& 58.3$\pm$7.4
& $-$11.0 \\

OPT-1.3B
& 35.0$\pm$2.7
& 13.0$\pm$3.6
& $-$22.0 \\
\bottomrule
\end{tabular}
\end{table}

\paragraph{More trainable parameters do not uniformly increase recovery.}
As shown in Figure~\ref{fig:appendix_lora_fullft} and
Table~\ref{tab:lora_fullft}, full fine-tuning improves mean recovery for
GPT-2 Small and Medium, but is substantially worse for GPT-2 Large and
OPT-1.3B and exhibits larger variance in several settings. Although this
historical experiment is not directly comparable to the current main
protocol, it shows that lightweight LoRA adaptation is sufficient to produce
substantial recovery and that increasing the number of trainable parameters
does not guarantee stronger recovery.

\subsection{Random-Seed Consistency}
\label{app:seed_stability}
The headline Matched Random comparison uses the same three pre-specified
seeds, 42, 43, and 44, across model settings. On GPT-2 Medium, the
corresponding recovery rates are 55.0\%, 51.0\%, and 53.0\%, yielding
$53.0\pm2.0\%$. These are the values used in the main-paper comparison.
Full per-seed results for all six models are reported in
Table~\ref{tab:appendix_per_seed}.

\subsection{Genuine-Private and ICL Baselines}
\label{app:additional_baselines}

\paragraph{Genuine-Private LoRA.}
We compare \method with a stronger-supervision reference that replaces
model-generated supervision with 50 genuine private-reference associations.
For GPT-2 Small, Medium, and Large, OPT-1.3B, and Qwen3-1.7B, this baseline
uses the same target model and LoRA adaptation budget as \method .
It is therefore a Janus-style genuine-private reference rather than an exact
reproduction of the original Janus protocol.

\begin{table}[t]
\centering
\small
\caption{Comparison with genuine-private supervision. Values are target
recovery percentages; stochastic results report mean $\pm$ sample standard
deviation over three seeds.}
\label{tab:genuine_private}
\begin{tabular}{lccc}
\toprule
Model & Hidden & \method & Genuine-Private LoRA \\
\midrule
GPT-2 Small  & 23.0 & $40.3\pm1.5$ & $46.3\pm0.6$ \\
GPT-2 Medium & 42.0 & $63.0\pm2.6$ & $66.3\pm1.2$ \\
GPT-2 Large  & 55.0 & $72.0\pm1.0$ & $77.0\pm1.0$ \\
OPT-1.3B     & 21.0 & $30.0\pm2.7$ & $69.7\pm1.2$ \\
Qwen3-1.7B   & 52.0 & $58.0\pm1.7$ & $74.0\pm1.0$ \\
\bottomrule
\end{tabular}
\end{table}

Genuine-private supervision is stronger overall, but the gap is
model-dependent. On GPT-2 Medium and Large, Genuine-Private LoRA exceeds
\method by only 3.3 and 5.0 percentage points, respectively, whereas the gap is
substantially larger on OPT-1.3B and Qwen3-1.7B.

For Qwen3-4B, a matched Genuine-Private LoRA run is not available.
The 66.0\% private-supervision value shown in the main figure is included
only as an unmatched descriptive reference and should not be interpreted as
a matched result under the current protocol.

\paragraph{Janus-style in-context learning.}
We additionally evaluate whether genuine private examples can improve recovery
without parameter updates. Following the comparison motivated by
Janus~\citep{chen2024janus}, we use genuine private associations as
in-context demonstrations and evaluate 1-, 5-, 10-, and 20-shot prompting on
GPT-2 Medium and Large.

\begin{table}[t]
\centering
\small
\caption{Janus-style in-context learning comparison. Values are recovery
percentages; stochastic results report mean $\pm$ sample standard deviation
over three seeds.}
\label{tab:icl_baseline}
\begin{tabular}{lcc}
\toprule
Method & GPT-2 Medium & GPT-2 Large \\
\midrule
Hidden              & 42.0 & 55.0 \\
ICL (1-shot)         & $22.0\pm1.7$ & $33.0\pm2.0$ \\
ICL (5-shot)         & $6.7\pm3.5$  & $13.3\pm11.5$ \\
ICL (10-shot)        & $10.0\pm5.3$ & $13.3\pm5.5$ \\
ICL (20-shot)        & $10.7\pm3.8$ & $2.7\pm1.2$ \\
\midrule
\method                & $63.0\pm2.6$ & $72.0\pm1.0$ \\
Genuine-Private LoRA & $66.3\pm1.2$ & $77.0\pm1.0$ \\
\bottomrule
\end{tabular}
\end{table}

Under this matched prompting protocol, ICL does not improve recovery over the
unadapted target model on either model. Increasing the number of
demonstrations also does not monotonically improve recovery. We therefore use
ICL as an additional parameter-free reference rather than as a substitute for
the adaptation baselines.

\section{Post-hoc Analysis}
\label{app:posthoc}

This section provides additional post-hoc analyses of the supervision selected
during recovery. These experiments are intended to characterize empirical
relationships between candidate compatibility, gradient direction, and
recovery performance. They are diagnostic analyses rather than components of
the \method attack itself: target gradients are never available during candidate
generation, selection, or adaptation.

\subsection{Candidate Compatibility and Gradient Alignment}
\label{app:gradient_analysis}

We first examine whether candidates that appear more locally compatible with
the target model also tend to induce update directions that are more
aligned with target recovery.

For GPT-2 Medium, we analyze 100 candidate examples and compute two quantities
post hoc. The first is a candidate-level local compatibility score derived from
the frozen checkpoint. The second is the cosine similarity between the
candidate-induced gradient and a target-recovery gradient computed only for
analysis after the experiment has been completed.

\begin{figure}[t]
    \centering
    \includegraphics[width=0.7\columnwidth]
    {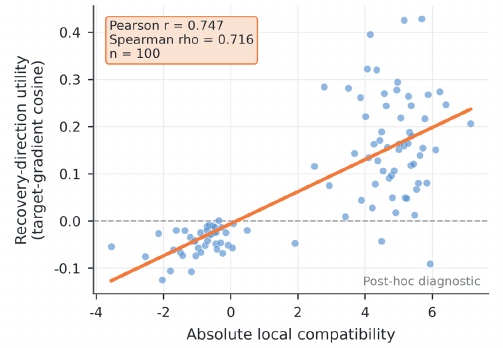}
\caption{Candidate compatibility versus target-gradient alignment on GPT-2 Medium.}
    \label{fig:gradient_analysis}
\end{figure}

As shown in Figure~\ref{fig:gradient_analysis}, candidate compatibility is
positively associated with target-gradient alignment. We observe a Pearson
correlation of
\[
r = 0.747
\]
and a Spearman rank correlation of
\[
\rho = 0.716.
\]
These results indicate that candidates scoring more favorably under the
post-hoc compatibility measure also tend to induce update directions that are
more aligned with the target-recovery direction.

\paragraph{Interpretation.}
This correlation provides evidence that local model compatibility is related
to the usefulness of candidate supervision for recovery. However, it does not
establish a causal mechanism, and the target gradient used in this analysis is
not available to the attack. The main \method selector remains the answer-token
negative log-likelihood computed solely from the frozen target model.

\subsection{Set-Level Gradient Comparison}
\label{app:set_gradient}

We next compare the aggregate gradient direction induced by several candidate
sets. For each set, we compute the cosine similarity between the update
direction produced by the selected supervision and the target-recovery
gradient.

This analysis was conducted using an earlier selector-analysis protocol and is
reported only as a post-hoc diagnostic.

\begin{table}[t]
\centering
\small
\caption{
Post-hoc set-level gradient cosine similarity on GPT-2 Medium.
Higher values indicate stronger alignment with the target-recovery direction.
The genuine-private set is shown only as a reference and is unavailable under
the \method threat model.
}
\label{tab:set_gradient_cosine}
\begin{tabular}{lc}
\toprule
Supervision set & Gradient cosine \\
\midrule
Bottom-ranked candidates & $-0.077$ \\
Random candidates        & $0.029$ \\
Historical \method-Abs selection & $0.633$ \\
Genuine-private reference & $0.805$ \\
\bottomrule
\end{tabular}
\end{table}

As shown in Table~\ref{tab:set_gradient_cosine}, the aggregate direction induced by randomly selected candidates is nearly orthogonal to the target-recovery gradient, while the historical compatibility-based selection produces substantially stronger positive alignment. The genuine-private reference provides an upper reference point in
this diagnostic.

These values are consistent with the hypothesis that candidate selection can
concentrate supervision whose induced updates are more compatible with target
recovery. We emphasize that this comparison is post hoc and does not imply
that \method estimates or optimizes target gradients during the attack.

\subsection{Additional Selector Analysis}
\label{app:selector_analysis}

Before adopting the simple Original-Loss selector used in the main paper, we
evaluated several alternative candidate-ranking strategies. These experiments
belong to an earlier selector-development protocol and are included to document
the robustness of the broader candidate-selection idea rather than to redefine
the current method.

The evaluated alternatives include a relative local-loss-gap score, a
local-basin score, and an automatically searched selector configuration.
Their best observed recovery rates on GPT-2 Medium are summarized in
Table~\ref{tab:additional_selectors}.

\begin{table}[t]
\centering
\small
\caption{
Historical selector-development results on GPT-2 Medium.
These values come from an earlier protocol and are not directly comparable to
the current one-round Original-Loss headline results.
}
\label{tab:additional_selectors}
\begin{tabular}{lc}
\toprule
Selector variant & Best recovery (\%) \\
\midrule
Relative local-gap & 64.7 \\
Local-basin        & 64.4 \\
Automatic selector search & 64.0 \\
\bottomrule
\end{tabular}
\end{table}

The more elaborate selectors do not provide a consistent advantage over the
best manually specified alternatives. In particular, automatic selector search
does not reliably improve upon simpler compatibility-based scores.

These observations motivated a simpler presentation in the final paper:
the headline method uses frozen-model answer-token loss, which is easy to
define, inexpensive to compute, and does not require perturbation-based
statistics or target-side supervision. The historical selector experiments
are retained here only as evidence that the recovery phenomenon is not tied to
a single highly tuned ranking heuristic.
\section{Boundary Experiments}
\label{app:boundaries}

The main experiments establish that model-generated supervision can increase
post-adaptation recovery in structured person-email tasks. We next examine
where this effect weakens or disappears. These experiments vary the amount and
source of exploitable structure in the underlying associations.

Unless otherwise noted, the experiments in this section were conducted under
earlier \method-Abs protocols and use independently trained checkpoints.
They are therefore intended as descriptive boundary analyses rather than
strictly matched extensions of the current one-round Original-Loss protocol.

\subsection{Opaque Random Bindings}
\label{app:opaque}

A potential alternative explanation for recovery on person-email tasks is
that the model can exploit lexical or compositional regularities between a
person's identity and the corresponding email address. To remove these cues,
we construct a non-compositional benchmark in which synthetically generated
identities are independently paired with random 12-character local parts under
the reserved \texttt{privacy.invalid} domain.

The abstract task format remains person-to-email, but the target answer has
no designed lexical relationship with the entity name. As summarized in
Tables~\ref{tab:structured_vs_opaque} and~\ref{tab:opaque_boundary}, the previously exposed
opaque condition drops from 95.0\% recovery in the original model to 1.0\% after non-target continual training and reaches only 2.0\% after ReGap-Abs, while the matched target-unexposed branch remains at 0\%.

\begin{table}[t]
\centering
\small
\caption{Boundary comparison between structured associations and opaque random
bindings under the historical three-round \method-Abs protocol.}
\label{tab:structured_vs_opaque}
\begin{tabular}{lc}
\toprule
Association setting & Recovery rate (\%) \\
\midrule
Structured associations & 61.3 \\
Opaque random bindings & 2.0 \\
\bottomrule
\end{tabular}
\end{table}

\begin{table}[t]
\centering
\small
\caption{
Recovery on opaque non-compositional bindings.
The experiment uses the historical three-round \method-Abs protocol.
Values are percentages.
}
\label{tab:opaque_boundary}
\begin{tabular}{lccc}
\toprule
Exposure condition & Memorized & Hidden & \method \\
\midrule
Previously Exposed & 95.0 & 1.0 & 2.0 \\
Target-Unexposed   & -   & 0.0 & 0.0 \\
\bottomrule
\end{tabular}
\end{table}

No exact target answer appears in either the generated candidate pools or the
selected pseudo-supervision in this experiment. Thus, the weak recovery cannot
be attributed to direct replay of correct target answers.

\paragraph{Opaque bindings define an empirical boundary.}
Previously memorized arbitrary bindings show almost no answer-free recovery
after direct accessibility is reduced. This result contrasts with the much
larger gains observed in structured person-email settings and is consistent
with exploitable association structure playing an important role in recovery.
It does not establish that structure is the only factor determining
recoverability, since the opaque branch uses its own independently trained
checkpoint and historical adaptation protocol.

\subsection{Controlled Enron Associations}
\label{app:controlled_enron}

The main benchmark derives person identifiers mechanically from email
local-parts, creating an explicit lexical relationship between query identities
and target answers. To examine recovery when names and addresses originate
from separate fields, we additionally construct a Controlled Enron benchmark
from structured message metadata.

Sender display names or \texttt{X-From} fields are paired with the
corresponding sender email addresses from the same message.
After filtering, the dataset contains 757 clean one-to-one person-email
associations.

Under the historical three-round \method-Abs protocol, recovery increases from
27.0\% at the target model to
77.7$\pm$1.5\%.
Random model-generated supervision also produces substantial recovery at
66.7$\pm$2.3\%, while Original-Loss selection reaches
75.7$\pm$2.1\%.
A genuine-private reference reaches 98.0\%.

\begin{table*}[t]
\centering
\small
\caption{
Controlled Enron and Non-Enron recovery under the historical three-round
protocol. The two settings use independently constructed checkpoints and are
reported as descriptive boundary evidence rather than as a matched causal
ablation. Values are percentages and report mean $\pm$ sample standard
deviation where applicable.
}
\label{tab:controlled_enron}
\begin{tabular}{lccccc}
\toprule
Setting & Hidden & Random & Original Loss & \method-Abs &
Genuine-Private \\
\midrule
Controlled Enron
& 27.0
& 66.7$\pm$2.3
& 75.7$\pm$2.1
& 77.7$\pm$1.5
& 98.0$\pm$0.0 \\

Non-Enron
& 46.0
& 50.3$\pm$2.1
& 42.0$\pm$9.9
& 46.7$\pm$3.2
& 80.7$\pm$1.2 \\
\bottomrule
\end{tabular}
\end{table*}

The Controlled Enron experiment suggests that the observed recovery is not
restricted to associations constructed by mechanically converting email
local-parts into person identifiers. However, it does not establish that all
real-world person-email distributions support similarly strong recovery.

\paragraph{Recovery is not explained solely by direct candidate replay.}
We further examine targets whose correct answers are absent from their own
first-round candidate pools. On this subset, recovery increases from 0.7\% at
the target model to 66.0\% after the third \method round.
Across all three seeds, none of these target answers appears in another
target's first-round candidate pool or selected pseudo-supervision.

This audit rules out direct first-round exact-answer replay for this subset.
It does not, however, establish a particular internal memory mechanism or rule
out the use of broader distributional regularities.

\subsection{Non-Enron-Domain Associations}
\label{app:non_enron}

The Controlled Enron data contain a strong corporate-domain regularity:
many target addresses belong to \texttt{enron.com}.
We therefore construct an additional subset that excludes addresses from
\texttt{enron.com} and its subdomains.
The resulting dataset contains 714 external-domain person-email associations.

Removing the dominant corporate-domain regularity sharply attenuates the
answer-free recovery effect.
As shown in Table~\ref{tab:controlled_enron}, the target model achieves
46.0\% recovery, while \method-Abs reaches
46.7$\pm$3.2\%, corresponding to only a small mean increase.
Matched Random Selection reaches 50.3$\pm$2.1\%, and Original-Loss selection
reaches 42.0$\pm$9.9\%.

In contrast, the genuine-private reference reaches
80.7$\pm$1.2\%, showing that substantial adaptation headroom remains even when
answer-free model-generated supervision provides little additional recovery.

\paragraph{The benefit of model-generated supervision is distribution dependent.}
The contrast between Controlled Enron and Non-Enron suggests that the strength
of answer-free recovery depends substantially on the statistical structure
available in the association distribution.
Removing the dominant domain pattern weakens the effect despite retaining the
same broad person-email task format.

We interpret this result conservatively.
The Controlled Enron and Non-Enron checkpoints are constructed independently,
and the comparison is therefore not a strictly matched intervention on a
single trained model.
Nevertheless, together with the opaque-binding experiment, it provides
descriptive evidence that \method does not act as a generic mechanism for
recovering arbitrary hidden associations.
\subsection{Explicit NPO Unlearning Stress Test}
\label{app:npo_defense}

Our main target models reduce direct target accessibility through non-target
continual training. We additionally consider a distinct setting in which
GPT-2 Medium is first modified by Negative Preference Optimization (NPO),
producing two defended checkpoints with different levels of direct target
recoverability, denoted Moderate and Strong.

Unlike the historical experiments in Sections~\ref{app:opaque}--%
\ref{app:non_enron}, the recovery stage of this stress test uses the current
one-round Original-Loss R1 protocol. Starting independently from each NPO
checkpoint, we compare the unadapted model with Matched Random and \method
under the same one-round adaptation procedure used in the main experiments.
The NPO checkpoints themselves are produced by an explicit defense objective
rather than by the non-target continual-training procedure used to construct
the main target models. We therefore treat this experiment as a boundary
stress test under a different accessibility-reduction mechanism, rather than
as a strictly matched causal comparison with the main setting.

\begin{table}[t]
\centering
\small
\caption{Recovery after explicit NPO unlearning on GPT-2 Medium. Values are
percentages; stochastic results report mean $\pm$ sample standard deviation
over three seeds.}
\label{tab:npo_defense}
\begin{tabular}{lccc}
\toprule
NPO setting & Before adaptation & Matched Random & \method \\
\midrule
Moderate & 71.0 & $72.0\pm0.0$ & $71.3\pm0.6$ \\
Strong   & 32.0 & $32.0\pm0.0$ & $32.0\pm0.0$ \\
\bottomrule
\end{tabular}
\end{table}

As shown in Table~\ref{tab:npo_defense}, \method provides almost no additional
recovery from either NPO checkpoint. Under the Moderate setting, recovery
changes from 71.0\% before adaptation to $71.3\pm0.6\%$ after \method.
Under the Strong setting, recovery remains unchanged at 32.0\%. Matched Random
exhibits the same qualitative behavior.

These results show that the substantial recovery observed after ordinary
non-target continual training does not automatically extend to checkpoints
produced by explicit NPO. Because the NPO checkpoints arise from a different
accessibility-reduction mechanism, this experiment should be interpreted as a
boundary stress test rather than as a matched comparison between continual
training and NPO. It also covers only one model and one unlearning method and
therefore does not establish general robustness to unlearning defenses.

\section{Current-Protocol Boundary Controls}
\label{app:current_boundary}

To further characterize the boundary conditions of \method under the
current one-round Original-Loss protocol, we evaluate two additional
controls. Unlike the historical boundary experiments in
Appendix~\ref{app:boundaries}, these experiments follow the same evaluation
protocol as the main study: one Generate-Select-Adapt round, Original-Loss
candidate selection, and three random seeds.

\subsection{Opaque Random Binding under Current R1}
\label{app:opaque_current}

We evaluate \method on an Opaque random-binding benchmark where the association
structure is preserved but semantic information about the target mapping is
removed. This setting tests whether recovery can occur when the model itself
cannot generate informative target-specific supervision.

\begin{table}[t]
\centering
\caption{Current-protocol Opaque random-binding control.}
\label{tab:current_opaque}
\begin{tabular}{lccc}
\toprule
Setting & Hidden & Random & \method \\
\midrule
GPT-2 Medium
& 1.0 & 1.7$\pm$0.6 & 1.0$\pm$0.0 \\
\bottomrule
\end{tabular}
\end{table}

Across all three seeds, no exact target answer appears in either the generated candidate pools or the selected supervision examples. As shown in Table~\ref{tab:current_opaque},
\method does not improve over the target model in this setting. This result suggests
that adaptation alone is insufficient: recovery requires model-generated
supervision signals that preserve target-specific information.

\subsection{Controlled Enron under Current R1}
\label{app:controlled_current}

We additionally evaluate a Controlled Enron association benchmark under the
current protocol. This experiment uses a GPT-2 Large post-hoc target model; therefore, it is treated as a supplementary boundary analysis
rather than a strict matched-scale comparison with the main experiments.

\begin{table}[t]
\centering
\caption{Current-protocol Controlled Enron boundary control.}
\label{tab:controlled_current}
\begin{tabular}{lccc}
\toprule
Setting & Hidden & Random & \method \\
\midrule
GPT-2 Large
& 29.0 & 36.0$\pm$1.7 & 84.0$\pm$2.6 \\
\bottomrule
\end{tabular}
\end{table}

As shown in Table~\ref{tab:controlled_current}, \method substantially increases recovery in this setting. However, the strong recovery is accompanied by target-answer replay in the generated supervision. Across seeds, each candidate pool contains
49-51 exact target answers, and the selected supervision contains
25-27 target answers. Therefore, this result should be interpreted as
recovery enabled by model-generated target supervision rather than
fully answer-free recovery.
\section{Leakage and Reproducibility Audits}
\label{app:audits}

This section summarizes the audits used to verify that the reported recovery
results are not artifacts of target leakage, accidental answer replay,
checkpoint mismatch, or seed reuse. We distinguish audits of the main
target-identity-aware setting from those of the Unseen-Identity Control,
because exact target-answer replay is ruled out only in the latter.

\subsection{Identity-Overlap Audit}
\label{app:identity_audit}

For the Unseen-Identity Control, adaptation identities are generated
independently from the privacy benchmark.
We explicitly compare them against all identity sets used in the experiment:
the private-reference set, target set, non-target evaluation set, and the
complete memorized privacy pool.

Thus, the recovery observed in the Unseen-Identity Control cannot be
attributed to reusing target or memorized identities during adaptation.
The control nevertheless preserves the person-email task structure and
expected answer format, and should therefore not be interpreted as a
task-unaware setting.

\subsection{Exact Target-Answer Replay Audit}
\label{app:replay_audit}

We next audit whether the generated or selected adaptation examples contain
exact target answers.

For the Unseen-Identity Control, no exact target answer is found in either
the generated candidate pools or the selected adaptation examples across all
three evaluated seeds for GPT-2 Medium and GPT-2 Large. Table~\ref{tab:unseen_identity_audit} summarizes these overlap and replay checks,
all of which are zero across the evaluated seeds.

\begin{table}[t]
\centering
\caption{Leakage audit for the Unseen-Identity Control. 
All overlap and replay counts are zero across evaluated seeds.}
\label{tab:unseen_identity_audit}
\begin{tabular}{lc}
\toprule
Audit criterion & Result \\
\midrule
Private-reference identity overlap & 0 \\
Target identity overlap & 0 \\
Non-target evaluation identity overlap & 0 \\
All memorized privacy identity overlap & 0 \\
\midrule
Candidate-pool target-answer hits & 0  \\
Selected-example target-answer hits & 0 \\
\bottomrule
\end{tabular}
\end{table}

This result rules out exact target-answer replay as an explanation for the
Unseen-Identity recovery effect.

Importantly, this conclusion does not extend to the primary
target-identity-aware setting. In that setting, generated candidates can
occasionally match target answers exactly, and selection can concentrate some
of these candidates. We therefore do not claim that exact replay is absent
from the main target-aware attack. Instead, the Unseen-Identity Control
provides the cleaner evidence that substantial recovery can occur even when
exact target answers never appear in the generated or selected supervision.

\subsection{Frozen-Scorer and Fresh-Adapter Verification}
\label{app:protocol_audit}

The main \method protocol separates candidate generation and scoring from
adaptation.

For every main experiment:

\begin{itemize}
    \item candidate generation uses the fixed target model;
    \item candidate scoring uses the same frozen target model;
    \item target answers are not used during generation or scoring;
    \item each adaptation run initializes a fresh LoRA adapter from the same
    target model;
    \item adapters are not continued across random seeds; and
    \item the final adapted model is evaluated only after training is complete.
\end{itemize}

The matched Random Selection baseline follows the same protocol except for the
candidate-selection rule. It uses the same generated candidate pool, the same
number of selected examples, the same target model, and the same LoRA
optimization configuration.

These checks ensure that the \method-Random comparison primarily changes which generated examples are selected for adaptation, while holding the training budget and model checkpoint fixed.

\subsection{Seed and Protocol Consistency}
\label{app:seed_audit}

All headline stochastic results use seeds 42, 43, and 44 under the current
one-round Original-Loss protocol. As shown in Table~\ref{tab:appendix_per_seed}, for GPT-2
Medium, the matched Random recoveries are 55.0\%, 51.0\%, and 53.0\%,
giving $53.0\pm2.0\%$. The corresponding \method recoveries are
60.0\%, 64.0\%, and 65.0\%, giving $63.0\pm2.6\%$.

We additionally verify the current prior-exposure experiment using the same
100 target associations and the same one-round Original-Loss protocol.
As shown in Table~\ref{tab:prior_exposure_reproduction}, on the previously exposed target model,
\method increases recovery from 42.0\% to $63.0\pm2.6\%$, while Matched
Random reaches $53.0\pm2.0\%$. On the target-unexposed control, both remain
at 13.0\% for all three seeds.

\begin{table}[t]
\centering
\small
\caption{
Reproduction audit for the current prior-exposure control.
All conditions use the same 100 target associations and the same evaluation
protocol. Values are percentages.
}
\label{tab:prior_exposure_reproduction}
\begin{tabular}{lccc}
\toprule
Model checkpoint & Baseline & \method R1 & Random R1 \\
\midrule
Previously Exposed
& 42.0
& 63.0$\pm$2.6
& 53.0$\pm$2.0 \\

Target-Unexposed Control
& 13.0
& 13.0$\pm$0.0
& 13.0$\pm$0.0 \\
\bottomrule
\end{tabular}
\end{table}

The reproduced aggregate results exactly match the corresponding main-paper
numbers. The current prior-exposure control therefore uses the same
Original-Loss R1 adapters as the main GPT-2 Medium experiment rather than the
historical \method-Norm adapters used in earlier development runs.

\subsection{Result Provenance}
\label{app:provenance}

All headline results were verified against the corresponding audited
current-protocol experiment outputs before inclusion in the paper. The
six-model main results use the current one-round Original-Loss protocol,
with the evaluation procedure and random seeds specified in
Appendix~\ref{app:implementation}.

Historical experiments using earlier selectors, multi-round configurations,
or independently constructed checkpoints are reported only where explicitly
labeled as historical or diagnostic. These results are kept separate from the
current-protocol headline results and are not used to replace or augment the
primary comparisons.

Together with the identity-overlap, exact-answer replay, frozen-scorer,
fresh-adapter, seed, and checkpoint audits above, this protocol separation
provides additional checks against accidental leakage, mismatched adaptation
budgets, adapter reuse across seeds, and mixing results obtained under
different experimental configurations.

\section{Theoretical Analysis and Proofs}
\label{app:theory}

\subsection{Analytical Setup}
\label{app:theory_setup}

All gradients are taken in the fixed trainable coordinates of $M(\phi)$.
Candidate texts, prompts, tokenizations, and selection decisions are held
fixed when differentiating a training loss. In particular, we do not
differentiate through generation or selection. The target examples are
fixed analytical objects and are unavailable to the attack.

For a differentiable function $F$, $\beta$-smoothness on a segment means
$\|\nabla F(u)-\nabla F(v)\|_2\le\beta\|u-v\|_2$ for points $u,v$
on that segment. Results involving expectations assume the stated first
or second moments exist. Bounds on random updates additionally require a
common smoothness constant on the corresponding update segments almost
surely. No convexity assumption is used.

\subsection{Proof of Theorem~\ref{thm:local_recovery}}
\label{app:proof_local}

\begin{proof}
By the fundamental theorem of calculus,
\begin{align}
L_T(\delta)-L_T(0)
&=\int_0^1\langle\nabla L_T(t\delta),\delta\rangle\,dt\\
&=\langle g_T,\delta\rangle
+\int_0^1\langle\nabla L_T(t\delta)-g_T,\delta\rangle\,dt.
\end{align}
Cauchy--Schwarz and smoothness bound the integrand in the second term
by $\beta t\|\delta\|_2^2$. Integration gives
Equation~\ref{eq:general_update_bound}. Substituting $\delta=-\eta g_S$
and using
\begin{equation}
\langle g_T,g_S\rangle
=\left\langle\frac{1}{N}\sum_{i=1}^{N}g(t_i),
\frac{1}{|S|}\sum_{z\in S}g(z)\right\rangle
=\frac{1}{N|S|}\sum_{i=1}^{N}\sum_{z\in S}\kappa_h(t_i,z)
\end{equation}
gives Equation~\ref{eq:local_recovery_bound}. Since $\eta>0$,
Equation~\ref{eq:recovery_condition} makes its right-hand side strictly
smaller than $L_T(0)$.
\end{proof}

\subsection{Actual Updates and Multi-Step Fine-Tuning}
\label{app:multistep_bound}

\begin{corollary}[Trajectory-wise target-loss bound]
\label{cor:trajectory_bound}
Let $\phi_{h+1}=\phi_h+\delta_h$ for $h=0,\ldots,H-1$.
If $L_T$ is $\beta_h$-smooth on each update segment, then
\begin{equation}
L_T(\phi_H)-L_T(\phi_0)
\le\sum_{h=0}^{H-1}
\left(
\langle\nabla L_T(\phi_h),\delta_h\rangle
+\frac{\beta_h}{2}\|\delta_h\|_2^2
\right).
\label{eq:trajectory_bound}
\end{equation}
\end{corollary}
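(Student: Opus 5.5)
The plan is to apply the general update bound of Theorem~\ref{thm:local_recovery}, namely Equation~\ref{eq:general_update_bound}, once per step, re-centered at the current iterate, and then add the resulting inequalities. The first thing to note is that the theorem's proof never uses that its base point is $0$. It uses only the fundamental theorem of calculus along the segment and $\beta$-smoothness on that segment. So for each $h$ I will work with the segment $\{\phi_h+t\delta_h: t\in[0,1]\}$ and write
\begin{equation}
L_T(\phi_{h+1})-L_T(\phi_h)
=\langle\nabla L_T(\phi_h),\delta_h\rangle
+\int_0^1\langle\nabla L_T(\phi_h+t\delta_h)-\nabla L_T(\phi_h),\delta_h\rangle\,dt .
\end{equation}

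The second step is to bound the integrand. By Cauchy--Schwarz and $\beta_h$-smoothness on the $h$-th segment, it is at most $\beta_h t\|\delta_h\|_2^2$, and integrating over $t$ gives $\tfrac{\beta_h}{2}\|\delta_h\|_2^2$. This yields the one-step inequality
\begin{equation}
L_T(\phi_{h+1})-L_T(\phi_h)\le\langle\nabla L_T(\phi_h),\delta_h\rangle+\frac{\beta_h}{2}\|\delta_h\|_2^2 .
\end{equation}
Equivalently, I can reparameterize $\tilde L(\psi)=L_T(\phi_h+\psi)$ and invoke the theorem directly with $\delta=\delta_h$.

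The final step is to sum over $h=0,\ldots,H-1$. The left-hand sides telescope to $L_T(\phi_H)-L_T(\phi_0)$, which gives Equation~\ref{eq:trajectory_bound}.

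There is no real obstacle here. The only point needing care is that smoothness is assumed only segment by segment, with step-dependent constants $\beta_h$. The argument respects this because each one-step bound uses only the segment it lives on, and no global constant or convexity is needed. The $\delta_h$ may be arbitrary, for example LoRA steps with clipping or Adam, since they enter only as fixed displacements.
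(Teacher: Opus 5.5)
Your proposal is correct and follows the paper's proof: you re-apply the fundamental-theorem-of-calculus argument from Theorem~\ref{thm:local_recovery} at $\phi_h$ with displacement $\delta_h$, bound the remainder by $\tfrac{\beta_h}{2}\|\delta_h\|_2^2$ using Cauchy--Schwarz and segment-wise smoothness, and then sum so the left-hand sides telescope. Your observations that only smoothness on each segment (with step-dependent $\beta_h$) is needed, and that $\delta_h$ can be any actual optimizer displacement, also match the paper's accompanying discussion.
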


\begin{proof}
Apply the integral argument above at $\phi_h$ with displacement $\delta_h$
and sum the resulting inequalities. The left-hand sides telescope.
\end{proof}

This statement uses the actual optimizer displacements, including any
momentum, preconditioning, or weight decay. For the preconditioned update
$\delta_h=-\eta_h P_h\nabla L_S(\phi_h)$, the first-order term is
\[
-\eta_h\langle\nabla L_T(\phi_h),P_h\nabla L_S(\phi_h)\rangle.
\]
It need not have the same sign as the unpreconditioned gradient inner
product. Initial alignment therefore does not imply persistent alignment
along an arbitrary optimization trajectory.

\subsection{Shared Gradient Structure Without Identity Overlap}
\label{app:shared_structure}

\begin{assumption}[Shared task-gradient structure]
\label{ass:shared_structure}
Let $U\in\mathbb R^{d\times r}$ satisfy $U^\top U=I_r$. For a fixed
target $t_i$ and $z\sim Q$, suppose
\begin{equation}
g(t_i)=Ua_i+r_i,
\qquad
g(z)=Ub(z)+\xi(z),
\label{eq:shared_decomposition}
\end{equation}
where $U^\top r_i=0$ and $U^\top\xi(z)=0$. Assume
$\mathbb E_{z\sim Q}\|g(z)\|_2<\infty$, set
$\mu_Q=\mathbb E_{z\sim Q}[b(z)]$, and suppose
$|\mathbb E_{z\sim Q}\langle r_i,\xi(z)\rangle|\le\epsilon$
for some $\epsilon\ge0$.
\end{assumption}

\begin{proposition}[Identity-disjoint positive alignment]
\label{prop:shared_structure}
Under Assumption~\ref{ass:shared_structure}, if
$a_i^\top\mu_Q\ge\gamma$, then
\begin{equation}
\sigma_i(Q)\ge\gamma-\epsilon.
\label{eq:shared_alignment_bound}
\end{equation}
In particular, $\gamma>\epsilon$ implies positive expected alignment.
The result also holds when $Q$ is supported entirely on examples whose
identities and answer strings differ from those of $t_i$.
\end{proposition}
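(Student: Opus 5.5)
The plan is to expand the kernel $\kappa_h(t_i,z)=\langle g(t_i),g(z)\rangle$ using the decomposition in Assumption~\ref{ass:shared_structure}. Orthogonality then reduces it to a shared-subspace term plus a residual cross term, and we take expectations under $Q$. For each $z$, write
\[
\langle Ua_i+r_i,\;Ub(z)+\xi(z)\rangle
=a_i^\top U^\top U b(z)+a_i^\top U^\top\xi(z)+r_i^\top U b(z)+\langle r_i,\xi(z)\rangle .
\]
Since $U^\top U=I_r$, the first term equals $a_i^\top b(z)$. The two middle terms vanish because $U^\top\xi(z)=0$ and $U^\top r_i=0$, the latter giving $r_i^\top U=(U^\top r_i)^\top=0$. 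Hence $\kappa_h(t_i,z)=a_i^\top b(z)+\langle r_i,\xi(z)\rangle$ holds pointwise.

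The only step needing care is integrability, so that the expectation splits linearly; I expect this to be the main, though mild, obstacle. First, $b(z)=U^\top g(z)$ because $U^\top\xi(z)=0$, so $\|b(z)\|_2\le\|g(z)\|_2$, and $\mathbb E\|g(z)\|_2<\infty$ makes $b$ integrable and $\mu_Q$ well defined. Second, $|\kappa_h(t_i,z)|\le\|g(t_i)\|_2\|g(z)\|_2$ with $g(t_i)$ fixed, so $\kappa_h(t_i,\cdot)$ is integrable. Third, $\langle r_i,\xi(z)\rangle=\kappa_h(t_i,z)-a_i^\top b(z)$ is therefore integrable as a difference of integrable functions. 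This is consistent with the assumption that its expectation is bounded by $\epsilon$.

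Taking expectations gives $\sigma_i(Q)=a_i^\top\mu_Q+\mathbb E_{z\sim Q}\langle r_i,\xi(z)\rangle$. Combining $a_i^\top\mu_Q\ge\gamma$ with $\mathbb E\langle r_i,\xi(z)\rangle\ge-\epsilon$ yields $\sigma_i(Q)\ge\gamma-\epsilon$, and $\gamma>\epsilon$ gives strict positivity.

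For the final sentence of the proposition, note that nothing in the argument refers to the identity or answer strings of the support of $Q$. The argument uses only the gradient decomposition, so the bound holds verbatim when $Q$ is supported on identity- and answer-disjoint examples. I would close with a remark that the content of the hypothesis lies in $a_i^\top\mu_Q\ge\gamma$: the shared-subspace projections of target and auxiliary gradients must be aligned on average.
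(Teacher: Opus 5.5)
Your proof is correct and follows the same route as the paper's proof: expand $\kappa_h(t_i,z)$ using the decomposition, eliminate the cross terms by orthogonality to obtain $a_i^\top b(z)+\langle r_i,\xi(z)\rangle$, take expectations under $Q$, and observe that no step uses identity or answer overlap. Your integrability check ($b(z)=U^\top g(z)$, so $\|b(z)\|_2\le\|g(z)\|_2$, plus Cauchy--Schwarz for $\kappa_h$) is a useful detail that the paper leaves implicit in its moment assumption.
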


\begin{proof}
Orthogonality eliminates the cross terms, yielding
\begin{equation}
\kappa_h(t_i,z)
=a_i^\top b(z)+\langle r_i,\xi(z)\rangle.
\end{equation}
Taking expectations gives
$\sigma_i(Q)=a_i^\top\mu_Q+
\mathbb E\langle r_i,\xi(z)\rangle\ge\gamma-\epsilon$.
Neither expansion uses identity or answer overlap.
\end{proof}

The projection decomposition can be formed for any fixed $U$; its
substantive conditions are positive shared alignment and a controlled
residual contribution. Identity disjointness alone establishes neither.
Likewise, removing compositional structure does not mathematically force
$\mu_Q=0$, and the proposition does not compare models with different
exposure histories. These are separate empirical mechanism questions.

\subsection{Self-Sampling and Answer-Length Weighting}
\label{app:self_sampling}

\begin{proposition}[Matched self-sampling nullity]
\label{prop:self_sampling_nullity}
Fix $q$. Let $P_\phi(\cdot\mid q)$ be a differentiable normalized
distribution on a fixed countable support of complete answers, with
positive probabilities on that support near $\phi=0$. Assume
differentiation can be interchanged with summation. If
$Y\sim P_0(\cdot\mid q)$ and
$\widetilde\ell_Y(\phi)=-\log P_\phi(Y\mid q)$, then
\begin{equation}
\mathbb E_{Y\sim P_0(\cdot\mid q)}
[\nabla_\phi\widetilde\ell_Y(0)]=0.
\label{eq:self_sampling_nullity}
\end{equation}
\end{proposition}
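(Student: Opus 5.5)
The plan is to use the standard score-function identity: the expected score under the sampling distribution vanishes because the probabilities sum to one for every $\phi$.

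\textbf{Setup.} Let $\mathcal Y$ be the fixed countable support. Positivity of $P_\phi(y\mid q)$ on $\mathcal Y$ near $\phi=0$ makes $\log P_\phi(y\mid q)$ differentiable there, with
\[
\nabla_\phi\widetilde\ell_y(0)=-\frac{\nabla_\phi P_\phi(y\mid q)\big|_{\phi=0}}{P_0(y\mid q)}.
\]

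\textbf{Computing the expectation.} Weighting by $P_0(y\mid q)$ and summing, the factor $P_0(y\mid q)$ cancels:
\[
\mathbb E_{Y\sim P_0(\cdot\mid q)}\bigl[\nabla_\phi\widetilde\ell_Y(0)\bigr]
=-\sum_{y\in\mathcal Y}\nabla_\phi P_\phi(y\mid q)\big|_{\phi=0}.
\]

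\textbf{Using normalization.} Normalization on the fixed support gives $\sum_{y\in\mathcal Y}P_\phi(y\mid q)=1$ for all $\phi$ near $0$. The left-hand side is therefore constant, and its gradient at $0$ is zero. By the assumed interchange of differentiation and summation,
\[
\sum_{y\in\mathcal Y}\nabla_\phi P_\phi(y\mid q)\big|_{\phi=0}=\nabla_\phi\Bigl(\sum_{y\in\mathcal Y}P_\phi(y\mid q)\Bigr)\Big|_{\phi=0}=0,
\]
which yields Equation~\ref{eq:self_sampling_nullity}.

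\textbf{Where the hypotheses are used.} The only delicate point is the countable, possibly infinite support. There, passing the gradient through the sum and the existence of the expectation both rest on the stated interchange hypothesis; the argument needs no separate justification beyond it.

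The fixed support matters because it lets normalization hold identically in $\phi$ on one set. Positivity guarantees that the log-derivative is defined at every support point. The statement uses the complete-sequence loss $-\log P_\phi$, not the length-normalized $\ell_z$ of Equation~\ref{eq:answer_loss}, so the cancellation does not transfer to per-token averaging. This is the point the surrounding text makes about answer-length weighting.
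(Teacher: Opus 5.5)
Your proposal is correct and matches the paper's proof: write the expected gradient as $-\sum_y P_0(y\mid q)\,\nabla_\phi P_\phi(y\mid q)|_{\phi=0}/P_0(y\mid q)$, cancel $P_0$, and use the interchange hypothesis to reduce it to $-\nabla_\phi 1=0$. Your remarks on where positivity, the fixed support, and the unnormalized complete-sequence loss are used agree with the paper's discussion after the proof.
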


\begin{proof}
Holding the sampling distribution fixed at $P_0$,
\begin{align}
\mathbb E[\nabla_\phi\widetilde\ell_Y(0)]
&=-\sum_y P_0(y\mid q)
\frac{\left.\nabla_\phi P_\phi(y\mid q)\right|_{\phi=0}}
{P_0(y\mid q)}\\
&=-\left.\nabla_\phi\sum_yP_\phi(y\mid q)\right|_{\phi=0}
=-\nabla_\phi 1=0.
\end{align}
\end{proof}

The distribution and likelihood must describe the same complete-answer
event, including termination or any explicitly modeled truncation.
The proposition does not automatically apply to an objective that omits
part of that event or reweights answers by length.

To make the weighting issue explicit, let
$\widetilde g(Y)=\nabla\widetilde\ell_Y(0)$ and consider the matched but
length-weighted objective $\widetilde\ell_Y(\phi)/n(Y)$ with
$n(Y)\ge1$. When the relevant moments exist,
\begin{equation}
\mathbb E\!\left[\frac{\widetilde g(Y)}{n(Y)}\right]
=\mathbb E\!\left[
\left(\frac{1}{n(Y)}-\mathbb E\frac{1}{n(Y)}\right)
\widetilde g(Y)\right],
\label{eq:length_weighted_sampling}
\end{equation}
which need not vanish. This follows by subtracting
$\mathbb E[1/n(Y)]\mathbb E[\widetilde g(Y)]=0$.
For example, let two complete answers have lengths $n_a\ne n_b$ and
probabilities $P_\phi(y_a)=\operatorname{sigmoid}(\phi)$ and
$P_\phi(y_b)=1-\operatorname{sigmoid}(\phi)$. At $\phi=0$, their
unweighted gradients are $-1/2$ and $1/2$, but their weighted expectation
is $\tfrac14(1/n_b-1/n_a)\ne0$.
Neither a nonzero weighted expectation nor biased generation establishes
positive alignment with the target gradient.

\subsection{Proof of Proposition~\ref{prop:topk_selector}}
\label{app:selector_proof}

\begin{proof}
Condition on $\mathbf s$. A score-only selector, possibly using independent
randomization, has conditional inclusion probabilities
$\pi_j(\mathbf s)\in[0,1]$ with $\sum_{j=1}^{m}\pi_j=K$.
Independence of its randomization from the candidates implies
\begin{equation}
\mathbb E[\langle g_T,g_S\rangle\mid\mathbf s]
=\frac{1}{K}\sum_{j=1}^{m}\pi_j(\mathbf s)a_j(\mathbf s).
\label{eq:selector_conditional_objective}
\end{equation}
Relabel the indices so that $s_h(z_1)\ge\cdots\ge s_h(z_m)$.
Assumption~\ref{ass:support_alignment} gives $a_1\ge\cdots\ge a_m$.
Writing $a_j=a_j(\mathbf s)$, we have
\begin{align}
\sum_{j=1}^{K}a_j-\sum_{j=1}^{m}\pi_j a_j
&=\sum_{j\le K}(1-\pi_j)(a_j-a_K)
+\sum_{j>K}\pi_j(a_K-a_j)
\ge0.
\end{align}
The equality uses $\sum_j\pi_j=K$. Top-$K$ selection attains the
upper bound by setting $\pi_j=1$ for $j\le K$ and zero otherwise.
Equal support scores have equal conditional expected alignments by the
assumption, so the tie-breaking rule does not affect optimality.
\end{proof}

No independence between pool candidates is needed: the assumption
conditions on the entire score vector. For random pool size, apply the
argument conditional on that size, with the corresponding fixed
$K=\min(K_0,m)$.

\subsection{From Conditional Ranking to a Finite-Step Loss Bound}
\label{app:selector_bound}

Let $S_K$ be the support-selected set, and define
\begin{equation}
A_K(\mathbf s)=\mathbb E[\langle g_T,g_{S_K}\rangle\mid\mathbf s],
\qquad
B_K(\mathbf s)=\mathbb E[\|g_{S_K}\|_2^2\mid\mathbf s].
\label{eq:conditional_alignment_moments}
\end{equation}

\begin{corollary}[Conditional expected target-loss decrease]
\label{cor:conditional_descent}
Fix $\eta>0$. Suppose $B_K(\mathbf s)<\infty$ and a common
$\beta>0$ bounds the smoothness of $L_T$ on the segments
$[0,-\eta g_{S_K}]$ almost surely. Then
\begin{equation}
\mathbb E[L_T(-\eta g_{S_K})-L_T(0)\mid\mathbf s]
\le-\eta A_K(\mathbf s)
+\frac{\beta\eta^2}{2}B_K(\mathbf s).
\label{eq:conditional_selection_bound}
\end{equation}
In particular, the conditional expected target loss decreases whenever
$A_K(\mathbf s)>0$ and
\[
0<\eta<\frac{2A_K(\mathbf s)}{\beta B_K(\mathbf s)},
\]
provided the same segment-smoothness condition holds.
\end{corollary}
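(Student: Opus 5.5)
The plan is to apply the general update bound of Theorem~\ref{thm:local_recovery}, Equation~\ref{eq:general_update_bound}, pathwise to the random displacement $\delta=-\eta g_{S_K}$, and then take conditional expectations given $\mathbf s$. Almost surely, $L_T$ is $\beta$-smooth on the segment $[0,-\eta g_{S_K}]$. On every realization, the integral argument from the proof of Theorem~\ref{thm:local_recovery} therefore gives
\begin{equation*}
L_T(-\eta g_{S_K})-L_T(0)\le-\eta\langle g_T,g_{S_K}\rangle+\frac{\beta\eta^2}{2}\|g_{S_K}\|_2^2 .
\end{equation*}
Here $g_T$ is a fixed vector, since the targets are fixed analytical objects. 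The only randomness is in the pool and in the selector's independent randomization, both of which enter through $S_K$.

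Next we take $\mathbb E[\,\cdot\mid\mathbf s]$ of both sides, using monotonicity and linearity of conditional expectation.

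\textbf{Integrability.} $B_K(\mathbf s)<\infty$ gives conditional integrability of $\|g_{S_K}\|_2^2$. Cauchy--Schwarz gives $|\langle g_T,g_{S_K}\rangle|\le\|g_T\|\,\|g_{S_K}\|$, and $\|g_{S_K}\|$ is conditionally integrable because its second moment is finite. Hence $A_K(\mathbf s)$ is well defined.

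\textbf{The left-hand side.} Its conditional expectation may a priori be $-\infty$, or may need to be read in the extended sense. Since it is bounded above by an integrable quantity, its conditional expectation exists in $[-\infty,\infty)$, and the inequality is preserved. Substituting the definitions of $A_K$ and $B_K$ yields Equation~\ref{eq:conditional_selection_bound}.

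\textbf{The descent condition.} The right-hand side equals $\eta\bigl(-A_K+\tfrac{\beta\eta}{2}B_K\bigr)$. With $\eta>0$, this is strictly negative iff $A_K>\tfrac{\beta\eta}{2}B_K$.

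Two cases arise. If $B_K>0$, the condition is equivalent to $\eta<2A_K/(\beta B_K)$, which is consistent with $\eta>0$ when $A_K>0$. If $B_K=0$, then $g_{S_K}=0$ almost surely, so $A_K=0$, contradicting $A_K>0$. This means the stated interval implicitly requires $B_K>0$, and I would note this explicitly, since $A_K>0$ forces $B_K>0$.

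The main (mild) obstacle is measure-theoretic. One must make sure the conditional expectation of the loss difference is meaningful. One must also make sure that pulling the fixed $g_T$ out of the inner product is justified, which follows from linearity once integrability is established. Finally, the almost-sure common smoothness constant is exactly what licenses applying the pathwise bound before averaging. Optionally, I would remark that combining this with Proposition~\ref{prop:topk_selector} shows top-$K$ support selection maximizes the first-order term $A_K$ among score-only selectors.
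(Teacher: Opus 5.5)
Your proposal is correct and follows essentially the same route as the paper's proof. Both apply Theorem~\ref{thm:local_recovery} pathwise to each realized selected set $S_K$ and take conditional expectations given $\mathbf s$. Both then note that $A_K(\mathbf s)>0$ forces $B_K(\mathbf s)>0$, since otherwise $g_{S_K}=0$ conditionally almost surely. Your integrability bookkeeping fills in what the paper compresses into ``the stated moment and smoothness conditions justify the bound.'' The extended-value caveat on the left-hand side is unnecessary, since $L_T\ge 0$ bounds that difference below by the constant $-L_T(0)$.
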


\begin{proof}
Apply Theorem~\ref{thm:local_recovery} to each realized selected set and
take conditional expectations. The stated moment and smoothness conditions
justify the bound. If $A_K>0$, then $B_K>0$: otherwise $g_{S_K}=0$
conditionally almost surely, which would imply $A_K=0$. The stated
learning-rate inequality makes the upper bound strictly negative.
\end{proof}

Proposition~\ref{prop:topk_selector} maximizes $A_K$, not the complete
right-hand side of Equation~\ref{eq:conditional_selection_bound}, because
$B_K$ also depends on selection. Moreover, support need not satisfy the
ordering assumption. For a scalar-logit binary model with $p(1)=0.9$
and target answer $0$, the cross-entropy gradient is $g(y)=0.9-y$.
Hence the candidate $1$ has higher support than $0$, but
$I_T(1)=0.9(-0.1)=-0.09$ and $I_T(0)=0.9^2=0.81$.
Thus, high support alone does not imply beneficial alignment.
The moments above are analytical quantities, not inputs to an attack-side
learning-rate or model-selection rule.

\subsection{Connecting Loss Changes to Decoded Recovery}
\label{app:margin_crossing}

Let $R_i(\phi)$ be the indicator in Equation~\ref{eq:recovery} for
$M(\phi)$. Under a fixed realization of the evaluation protocol,
\begin{equation}
\operatorname{Rec}(M(\delta))-\operatorname{Rec}(M_h)
=\frac{|\{i:R_i(0)=0,\ R_i(\delta)=1\}|
-|\{i:R_i(0)=1,\ R_i(\delta)=0\}|}{N}.
\label{eq:recovery_transition_identity}
\end{equation}
The identity follows by summing $R_i(\delta)-R_i(0)$, whose possible
values are $1,-1,0$. It also holds after taking expectations over
evaluation randomness.

A target-loss decrease alone does not determine these transitions.
For example, in a one-token task with target $a$, changing probabilities
from $(p_a,p_b,p_c)=(0.40,0.35,0.25)$ to $(0.41,0.45,0.14)$ lowers
$-\log p_a$ but changes the greedy answer from correct to incorrect.

The next lemma gives a sufficient condition for the specific case of
unmodified greedy decoding. It is not a characterization of arbitrary
substring recovery or a claim about other decoding algorithms.
Let $y_i=(y_{i,1},\ldots,y_{i,n_i})$ be the tokenization of a target
attribute with nonempty normalized text. For each position $t$ and
competitor token $v\ne y_{i,t}$ in a finite vocabulary $\mathcal V$, define
\begin{equation}
d_{i,t,v}(\phi)
=\log p_{M(\phi)}(y_{i,t}\mid q_i,y_{i,<t})
-\log p_{M(\phi)}(v\mid q_i,y_{i,<t}).
\label{eq:token_pair_margin}
\end{equation}

\begin{lemma}[Sufficient greedy-recovery condition]
\label{lem:greedy_recovery}
Suppose the generation budget is at least $n_i$, the decoder greedily
maximizes the unmodified next-token probabilities, and no external stopping
rule truncates a correct prefix before $n_i$ tokens. If every
$d_{i,t,v}$ is differentiable and $\Lambda_i$-smooth on $[0,\delta]$, and
\begin{equation}
\min_{\substack{1\le t\le n_i\\v\in\mathcal V,\ v\ne y_{i,t}}}
\left\{
d_{i,t,v}(0)+\langle\nabla d_{i,t,v}(0),\delta\rangle
-\frac{\Lambda_i}{2}\|\delta\|_2^2
\right\}>0,
\label{eq:greedy_margin_condition}
\end{equation}
then the generated continuation begins with $y_i$ and $R_i(\delta)=1$.
\end{lemma}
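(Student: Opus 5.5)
The plan is to treat each token-pair margin $d_{i,t,v}$ exactly as $L_T$ is treated in the proof of Theorem~\ref{thm:local_recovery}, but with the inequality running the other way. We then run an induction over decoding positions to show that greedy decoding reproduces $y_i$ token by token.

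\textbf{Step 1 (lower bound on the margins).} Fix $t$ and $v\ne y_{i,t}$. By the fundamental theorem of calculus along the segment $[0,\delta]$,
\[
d_{i,t,v}(\delta)-d_{i,t,v}(0)=\langle\nabla d_{i,t,v}(0),\delta\rangle+\int_0^1\langle\nabla d_{i,t,v}(s\delta)-\nabla d_{i,t,v}(0),\delta\rangle\,ds .
\]
Cauchy--Schwarz and $\Lambda_i$-smoothness bound the absolute value of the integrand by $\Lambda_i s\|\delta\|_2^2$. Integrating gives
\[
d_{i,t,v}(\delta)\ge d_{i,t,v}(0)+\langle\nabla d_{i,t,v}(0),\delta\rangle-\tfrac{\Lambda_i}{2}\|\delta\|_2^2 .
\]
By Equation~\ref{eq:greedy_margin_condition} the right-hand side is strictly positive. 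Since the vocabulary is finite, the minimum over $t$ and $v$ is attained, so $d_{i,t,v}(\delta)>0$ holds for every $t\le n_i$ and every $v\ne y_{i,t}$.

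\textbf{Step 2 (induction over positions).} We show by induction on $t$ that the greedy decoder under $M(\delta)$, started from the prompt $q_i$, emits $y_{i,1},\ldots,y_{i,t}$. Suppose the first $t-1$ emitted tokens equal $y_{i,<t}$. Then the decoder's next-token distribution is exactly $p_{M(\delta)}(\cdot\mid q_i,y_{i,<t})$, which is the conditional used in Equation~\ref{eq:token_pair_margin}. By Step 1, $\log p(y_{i,t}\mid\cdot)>\log p(v\mid\cdot)$ for all $v\ne y_{i,t}$, so $y_{i,t}$ is the unique argmax and no tie-breaking issue arises. Because the decoder uses the unmodified probabilities, it emits $y_{i,t}$. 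The generation budget is at least $n_i$ and no external stop truncates a correct prefix, so the induction runs to $t=n_i$.

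\textbf{Step 3 (from tokens to $R_i(\delta)=1$).} The continuation therefore begins with the token sequence $y_i$. Its decoded text thus has the text of $y_i$ as a prefix, modulo detokenization. Normalization (lowercasing, trimming, and collapsing whitespace) preserves contiguous substring inclusion, and $\operatorname{Norm}(y_i)\ne\varepsilon$ by hypothesis. Hence the indicator in Equation~\ref{eq:recovery} equals $1$.

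\textbf{Main obstacle.} The delicate step is Step 3, the passage from token sequences to normalized strings. Detokenizing a prefix may interact with the following tokens: leading or merged whitespace, for instance, can make trimming of $y_i$ differ from the corresponding segment of the continuation. I would handle this by noting that contiguous inclusion survives lowercasing and whitespace collapsing applied to the whole string. Leading whitespace of $y_i$ is removed by $\operatorname{Norm}$, which only shortens the pattern, so the normalized pattern still appears contiguously. Where needed, I would state the standard assumption that decoding concatenates token strings.
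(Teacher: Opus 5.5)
Your proposal is correct and follows the paper's proof: the paper likewise applies the smoothness inequality to $-d_{i,t,v}$ to obtain $d_{i,t,v}(\delta)>0$ for every $t$ and $v\ne y_{i,t}$, then inducts over decoding positions, using the budget and stopping assumptions to reach $n_i$. Your Step~3 is a more careful version of the paper's one-line passage from ``the continuation begins with $y_i$'' to $R_i(\delta)=1$, since it makes explicit the detokenization-by-concatenation assumption and the fact that $\operatorname{Norm}$ preserves contiguous substring inclusion.
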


\begin{proof}
Applying the smoothness inequality to $-d_{i,t,v}$ gives
\begin{equation}
d_{i,t,v}(\delta)
\ge d_{i,t,v}(0)+\langle\nabla d_{i,t,v}(0),\delta\rangle
-\frac{\Lambda_i}{2}\|\delta\|_2^2>0.
\end{equation}
At the first position, the correct token strictly exceeds every competitor
in probability and is selected. Inductively, if the first $t-1$ tokens
equal $y_{i,<t}$, the decoder encounters exactly the prefix used in
Equation~\ref{eq:token_pair_margin}, and again selects $y_{i,t}$.
The budget and stopping assumptions allow this induction through $n_i$.
The resulting continuation therefore begins with the target attribute,
so its nonempty normalized string is included in the normalized output.
\end{proof}

Note this condition is only sufficient but not necessary: a target may also appear
later in the continuation. Together with
Equation~\ref{eq:recovery_transition_identity}, it implies a gain of at
least $|\mathcal J|/N$ if a set $\mathcal J$ of previously unrecovered
targets satisfies the condition and all previously recovered targets remain
recovered. Neither property follows solely from reducing the average
teacher-forced target loss. The recovery definition remains the normalized
substring metric in Equation~\ref{eq:recovery}; whole-field extraction
accuracy is a different metric.

\end{document}